\documentclass[conference]{IEEEtran}
\IEEEoverridecommandlockouts
\usepackage{amsmath, graphicx, hyperref, xcolor, subcaption, braket, amsthm, amssymb, nicefrac}
\usepackage{algorithm} 
\usepackage{algpseudocode}
\usepackage{flushend}
\usepackage[nameinlink, noabbrev]{cleveref}

\def\x{{\mathbf x}}

\makeatletter
\def\thmheadbrackets#1#2#3{%
  \thmname{#1}\thmnumber{\@ifnotempty{#1}{ }\@upn{#2}}%
  \thmnote{ {\the\thm@notefont(#3)}}}
\makeatother

\newtheoremstyle{brakets}
  {}
  {}
  {\itshape}
  {}
  {\bfseries}
  {.}
  { }
  {\thmheadbrackets{#1}{#2}{#3}}

\theoremstyle{brakets}
\newtheorem{theorem}{Theorem}
\newtheorem{lemma}{Lemma}

\newtheorem{assumption}{Assumption}
\newtheorem{definition}{Definition}
\newtheorem{remark}{Remark}

\newtheorem*{problem*}{Problem}

\newtheorem*{objective*}{Objective}
\newtheorem*{key*}{Key Observation}

\def\BibTeX{{\rm B\kern-.05em{\sc i\kern-.025em b}\kern-.08em
    T\kern-.1667em\lower.7ex\hbox{E}\kern-.125emX}}
\begin{document}

\newcommand{\Or}{\mathcal{O}}
\newcommand{\loss}{\mathcal{L}}
\newcommand{\U}{\mathbf{U}}
\newcommand{\V}{\mathbf{V}}
\newcommand{\W}{\mathcal{W}}
\newcommand{\Ham}{\mathbf{H}}
\newcommand{\Op}{\mathbf{O}}
\newcommand{\ran}{\mathbf{v}}
\newcommand{\rant}{\ran^{(t)}}
\newcommand{\dd}{D_\ran(\z)}
\newcommand{\ddt}{D^{(t)}_\ran(\z)}
\newcommand{\ddcap}{\hat{D}_\ran(\z)}
\newcommand{\ddtcap}{\hat{D}^{(t)}_\ran(\z)}
\newcommand{\z}{\boldsymbol{\theta}}
\newcommand{\zt}{\boldsymbol{\theta}^{(t)}}
\newcommand{\ztp}{\boldsymbol{\theta}^{(t+1)}}
\newcommand{\nf}{\hat{f}\big(\z\big)}
\newcommand{\nft}{\hat{f}\big(\zt\big)}
\newcommand{\nftp}{\hat{f}\big(\zt+\frac{\pi}{2}\ej\big)}
\newcommand{\nftm}{\hat{f}\big(\zt-\frac{\pi}{2}\ej\big)}
\newcommand{\gp}{g_{\ran}^{\mu}}
\newcommand{\f}{f\big(\z\big)}
\newcommand{\ft}{f\big(\zt\big)}
\newcommand{\ftp}{f\big(\ztp\big)}
\newcommand{\F}{F\left(\z\right)}
\newcommand{\Ft}{F\left(\zt\right)}
\newcommand{\Ftp}{F\left(\ztp\right)}
\newcommand{\fs}{f^*}
\newcommand{\df}{\f-\fs}
\newcommand{\dft}{\ft-\fs}
\newcommand{\dftp}{\ftp-\fs}
\newcommand{\edf}{\mathbb{E}\left[\df\right]}
\newcommand{\edft}{\mathbb{E}\left[\dft\right]}
\newcommand{\edftp}{\mathbb{E}\left[\dftp\right]}
\newcommand{\eft}{\mathbb{E}\left[\ft \right]}
\newcommand{\eftp}{\mathbb{E}\left[\ftp \right]}
\newcommand{\eF}{\mathbb{E}\left[\F \right]}
\newcommand{\eFt}{\mathbb{E}\left[\Ft \right]}
\newcommand{\eFtp}{\mathbb{E}\left[\Ftp \right]}
\newcommand{\cedf}{\mathbb{E}\left[\df \vert \mathcal{F}_t \right]}
\newcommand{\cedft}{\mathbb{E}\left[\dft \big\vert \mathcal{F}_t \right]}
\newcommand{\cedftp}{\mathbb{E}\left[\dftp \big\vert \mathcal{F}_t \right]}
\newcommand{\ceF}{\mathbb{E}\left[\F \vert \mathcal{F}_t \right]}
\newcommand{\ceFt}{\mathbb{E}\left[\Ft \vert \mathcal{F}_t \right]}
\newcommand{\ceFtp}{\mathbb{E}\left[\Ftp \vert \mathcal{F}_t \right]}
\newcommand{\gt}{\boldsymbol{g}^{(t)}}
\newcommand{\gf}{\nabla\f}
\newcommand{\gfj}{\nabla_j\f}
\newcommand{\ngf}{\hat{\nabla}\f}
\newcommand{\ngfj}{\hat{\nabla}_j\f}
\newcommand{\gft}{\nabla\ft}
\newcommand{\ngft}{\hat{\nabla}\ft}
\newcommand{\ngftj}{\hat{\nabla}_j\ft}
\newcommand{\gftp}{\nabla\ftp}
\newcommand{\lb}{\left(}
\newcommand{\rb}{\right)}
\newcommand{\at}{\alpha_t}
\newcommand{\egt}{\mathbb{E} \left[ \lvert\lvert \nabla \ft \rvert\rvert^2 \right]}
\newcommand{\br}{\mathbb{R}}
\newcommand{\brd}{\mathbb{R}^d}
\newcommand{\WRP}{\par\qquad\(\hookrightarrow\)\enspace}
\newcommand{\roto}{\texttt{\textbf{Rotosolve}} }
\newcommand{\ej}{\mathbf{e}_j}
\newcommand{\gradcap}{\hat{\nabla}}
\newcommand{\maxeig}{\lambda_{\text{max}}}
\newcommand{\mineig}{\lambda_{\text{min}}}
\newcommand{\water}{H$_\text{2}$O }
\newcommand{\nhjt}{\hat{h}_j(\zt)}

\title{One Coordinate at a Time: Convergence Guarantees for Rotosolve in Variational Quantum Algorithms}

\author{\IEEEauthorblockN{1\textsuperscript{st} Sayantan Pramanik}
\IEEEauthorblockA{\textit{TCS Research, TATA Consultancy Services} \\
\textit{Indian Institute of Science}\\
Bangalore, India \\
sayantan.pramanik@tcs.com}
\and
\IEEEauthorblockN{2\textsuperscript{nd} M Girish Chandra}
\IEEEauthorblockA{\textit{TCS Research, TATA Consultancy Services} \\
Bangalore, India \\
m.gchandra@tcs.com}
}

\maketitle

\begin{abstract}
In this paper, we resolve an open question in the field of optimization algorithms for training parametrized quantum circuits: Does the popular \roto algorithm converge? Until now, interpolation-based coordinate descent methods such as \roto have mostly been treated as heuristics, lacking any formal convergence guarantees. We rigorously analyze \texttt{\textbf{Rotosolve}}, and show that it converges to $\varepsilon$-stationary points if the optimization landscape is non-convex and smooth; and to $\varepsilon$-suboptimal points if the objective function additionally obeys the Polyak-Łojasiewicz (PL) condition. Further, we derive explicit worst-case rates of convergence in the finite quantum measurement regime. These rates are contrasted against those from a similar coordinate-based method: Randomized Coordinate Descent (RCD). Although in the worst case their rates are, prima facie, equivalent, we present arguments for a more nuanced comparison between the two. We highlight that \roto is hyperparameter-free, and implicitly uses first and second derivatives in its updates. Finally, we supplement our theoretical findings with numerical experiments from Quantum Machine Learning; and compare the performance of \roto against RCD, Stochastic Gradient Descent, Simultaneous Perturbation Stochastic Approximation, and Randomized Stochastic Gradient Free methods.
\end{abstract}

\begin{IEEEkeywords}
Variational quantum algorithms, optimization, randomized coordinate minimization methods, convergence rates.
\end{IEEEkeywords}

\section{Introduction}\label{sec:intro}
Variational Quantum Algorithms (VQAs) \cite{vqe, vqa} rely on classical optimization subroutines to tune the parameters in quantum circuits. Although their efficiency is heavily dictated by the underlying optimizer, the theoretical understanding of these optimizers remains limited. In practice, one typically employs generic gradient-based (such as Stochastic Gradient Descent (SGD) \cite{sgd_handbook, sgd}, Randomized Coordinate Descent (RCD) \cite{Nesterov_coordinate_descent, Wright_coordinate_descent, rcd}, and Adam \cite{adam}) or gradient-free optimization methods (like Randomized Stochastic Gradient-Free (RSGF) \cite{Nesterov, Ghadimi_Lan}, or Simultaneous Perturbation Stochastic Approximation (SPSA) \cite{spsa1, spsa2}) which do not exploit the specific structure of quantum circuits or expectation values.

The \roto \cite{calculusonparameterizedquantumcircuits, smo, rotosolve, jacobidiagonalizationandersonacceleration} algorithm distinguishes itself by leveraging this structure. It first notes that when all but one parameter in the circuit are fixed, the univariate objective function can be expressed as a sinusoidal function of the free parameter. \roto exploits this observation by performing exact minimization along \emph{one coordinate at a time} by estimating the coefficients of the trigonometric function using a small number of circuit evaluations. The requisite number of evaluations required depends on the number of frequencies of the corresponding generator of the parametrized gate \cite{icd}. As a result, it can iteratively and optimally update individual parameters \emph{without resorting to step-sizes}.

Despite its widespread use and empirical success, no rigorous convergence guarantees or rates are known to exist \cite{icd}. In fact, it was recently empirically observed in \cite{icd} that \roto enjoys fast convergence when a high number of \emph{shots} is used, and fails to converge in the low-shot regime. We address this gap by showing that under general assumptions of smoothness (and optionally, invexity \cite{pl}), \roto provably converges. Furthermore, we derive explicit worst-case rates of convergence, and theoretically explain its dependence on shot-noise.

Besides convergence, we also focus on another important metric which dictates efficiency of the methods: its \emph{scalability} \cite{amira}. We show that the total number of shots required to reach a solution scales at most quadratically with the number of parameters in the circuit.

This work takes a first step toward a rigorous understanding of optimization methods tailored to variational quantum algorithms. We hope that the results herein will stimulate further research in this direction.

The rest of this paper is organized as follows: we summarize our main contributions in Section \ref{sec:contri}, and carefully describe the problem-setting and assumptions in Section \ref{sec:set}. A detailed overview of \roto is provided in Section \ref{sec:roto}, followed by a rigorous discussion on its convergence in Section \ref{sec:conv} and experimental validation in Section \ref{sec:exp}. Finally, we end with our conclusions in Section \ref{sec:con}.

\section{Main Contributions}\label{sec:contri}
The principal contributions made through this paper are summarized as follows:
\begin{enumerate}
    \item First, we demonstrate that the objective function arising from variational quantum tasks obeys a \emph{coordinate-wise Polyak-Łojasiewicz (PL) condition} \cite{pl} almost everywhere on the optimization landscape [\textcolor{blue}{Lemma \ref{lemma:pl}}].
    \item Next, to the best of our knowledge, we provide the first proof of convergence of the \roto algorithm.
    \item Additionally, we derive explicit worst-case rates of convergence and scalability-results that apply for different properties of the objective function. Briefly,
        \begin{enumerate}
            \item We show that if the objective function is non-convex and smooth, and the quantum circuit in consideration has $d$ optimizable parameters, then \roto requires $\Or(\nicefrac{d}{\varepsilon^2})$ iterations to reach an $\varepsilon$\emph{-stationary point} [\textcolor{blue}{Theorem \ref{thm:roto_smooth}}].
            \item Additionally, if the objective function also \emph{locally} obeys the coordinate-wise PL condition, then the algorithm can be shown to reach $\varepsilon$\emph{-suboptimal solutions} in $\Or(d\ln(\nicefrac{1}{\varepsilon}))$ iterations [\textcolor{blue}{Theorem \ref{thm:roto_pl}}].
        \end{enumerate}
    The shot-complexity of \roto has also been alluded to, in Section \ref{sec:shots}.
    \item Although these rates may look similar to those obtained by a related algorithm, RCD \cite{Nesterov_coordinate_descent, Wright_coordinate_descent, rcd} [\textcolor{blue}{Theorems \ref{thm:rcd_smooth}}, and \textcolor{blue}{\ref{thm:rcd_pl}}], we present subtle arguments in favor of \texttt{\textbf{Rotosolve}}. Most notably, we highlight that in comparison to RCD, the former is \emph{hyperparameter-free}; and implicitly utilizes \emph{second derivatives} (in addition to gradients) to speed up convergence.
    \item While \roto has been studied empirically in the context of VQAs, its application to Quantum Machine Learning (QML) tasks appears to be relatively unexplored. We demonstrate a straightforward adaptation of \roto for objective functions with a finite-sum structure, which are ubiquitous in machine learning \cite{finite_sum}. We then benchmark its performance against that of other optimizers such as SGD, RCD, RSGF, and SPSA on a binary classification problem.
\end{enumerate}

\section{Problem-Setting and Assumptions}\label{sec:set}
Unlike standard optimization algorithms, \roto exploits specific structural properties of the objective function. This necessitates a careful formulation of the problem-setting and underlying assumptions. We therefore devote a significant portion of the paper to explicitly formalize the mathematical formulation of the problem setup.

The initial state of the variational quantum circuit $\ket{\iota}$ is acted upon by a sequence of fixed and parametrized unitaries $\V_j$ and $\U_j(\z_j)$, such that the final state is $\ket{\psi(\z)} = \prod_{j=1}^d \V_j \U_j(\z_j)\ket{\iota}$. Here $j \in [d]$ and $\z \in \brd$ is the vector of optimizable parameters. The parametrized unitaries $\U_j(\z_j) = e^{-i\z_j\boldsymbol{\sigma}_j/2}$ are generated by the single-qubit Pauli matrices $\boldsymbol{\sigma}_j$, and the $\z_j$ are uncorrelated with each other. Finally, the expectation value of the Hermitian observable $\Ham$ is measured, which serves as our objective function:

\begin{problem*}\label{prob}
    \begin{equation}\label{eq:obj}
    \min_{\z \in \brd} \left( \f:= \braket{\psi(\z)|\Ham|\psi(\z)} \right).
\end{equation}
\end{problem*}

\noindent For the rest of this paper, we assume the following:

\begin{assumption}[Spectrum of the observable $\Ham$]
    The eigenvalues of the observable $\Ham$ lie in the interval $[-\mineig, \maxeig]$, where $\mineig$ and $\maxeig$ are real and non-negative.
\end{assumption}
\noindent This assumption ensures that the range of the objective function $f$ is also contained in $[-\mineig, \maxeig]$, i.e., $\f \in [-\mineig, \maxeig]$ for all $\z \in \brd$. For convenience, we use $\fs$ to denote the minimum value attained by $f$ over $\brd$.

\begin{assumption}[Smoothness of Objective Function \cite{rcd}]\label{as:smooth}
The objective function in Eq. \eqref{eq:obj} is coordinate-wise smooth. Formally, let $\ej$ represent the unit vector along any coordinate, with $j \in [d]$. Then, $\forall \, \z \in \brd$, and $\forall \, h \in \br$:
\begin{equation}
    |\nabla_j f(\z+\ej h)-\nabla_j\f| \leq L_j|h|.
\end{equation}
Here, $\nabla_j\f$ represents the $j^{\text{th}}$ component of the gradient vector with respect to $\z$; and $L_j$ is a positive, real number. Additionally, $f$ is $L$-smooth w.r.t. all of its coordinates considered together, i.e., $\forall \, \z, \boldsymbol{\phi} \in \brd$:
\begin{equation}
    ||\nabla\f-\nabla f(\boldsymbol{\phi})||_2 \leq L ||\z-\boldsymbol{\phi}||_2.
\end{equation}
\end{assumption}
\noindent It is easy to show that $L = \sum_j L_j$; and one may define the maximum and average smoothness constant over the coordinates $\bar{L} := \nicefrac{L}{d} \leq \max_j L_j =: L_{\text{max}}$ \cite{rcd}. Further, it can be shown that when the gate-generators are Pauli matrices, $L_j = \Or(\maxeig)$ for all $j \in [d]$, and $L = \Or(d\maxeig)$ \cite{liu_noise, vqa_smooth}. Henceforth, we will consider the smoothness assumption to always be satisfied for the rest of this paper.
 
\begin{assumption}[Local, Coordinate-wise PL condition]\label{as:pl}
Inspired by the local PL assumption in \cite{rcd}, we assume that the objective function in Eq. \eqref{eq:obj} locally obeys the coordinate-wise PL condition \cite{pl}. More specifically, let $\Theta$ denote the set of minimizers of $f$. Then, there exist $\delta, \mu_j > 0$ such that for any $\z \in \mathcal{N}(\Theta):= f^{-1}([\fs, \delta])$, the following holds for all $j \in [d]$:
\begin{equation}\label{eq:cwpl}
    \left(\nabla_j\f \right)^2 \geq 2\mu_j (\f-\fs) \geq 2\mu_j(\f - \fs_j).
\end{equation}
Here, $\fs_j$ is the coordinate-wise minimum of $f$, keeping all coordinates, but $j$, fixed. Moreover, 
\begin{equation}\label{eq:pl}
    ||\gf||^2_2 \geq 2 \mu (\f-\fs), 
\end{equation}
where, $\mu > 0$ is the PL-constant w.r.t. all the coordinates.
\end{assumption}
As with the smoothness condition above, one may easily verify that $\mu = \sum_j \mu_j$. Further, we show that the PL-condition is a valid assumption for variational quantum objective functions in Lemma \ref{lemma:pl}.

\begin{objective*}
    Under Assumption \ref{as:smooth} the goal of an optimization algorithm is to reach an $\varepsilon$-stationary solution of the objective function. Additionally, if Assumption \ref{as:pl} is satisfied, then we aim to obtain an $\varepsilon$-suboptimal point.    
\end{objective*}
\noindent  We formally define $\varepsilon$-stationarity and $\varepsilon$-suboptimality below:

\begin{definition}[$\varepsilon$-stationarity]
    A point $\z \in \brd$ is called an $\varepsilon$-stationary solution in the mean-squared sense if $\mathbb{E}[||\nabla \f||^2] \leq \varepsilon^2$.
\end{definition}

\begin{definition}[$\varepsilon$-suboptimality]
    A point $\z \in \brd$ is called an $\varepsilon$-optimal (or suboptimal) solution in the expected sense if $\mathbb{E}[\f-\fs] \leq \varepsilon$.
\end{definition}

Having stated all the assumptions on the nature of the objective function, we now discuss a more practical scenario. While the objective function has been defined exactly, in reality, one only has access to stochastic estimates $\nf$ of $\f$. We make the following assumption on the properties of such estimates.
\begin{assumption}[Properties of Stochastic Oracle]\label{as:prop}
    The finite-shot estimates of the objective function are unbiased, and have a bounded variance: $\mathbb{E}[\nf] = \f$, and $\mathbb{E}[(\nf-\f)^2] \leq \sigma^2$, for all $\z \in \brd$.
\end{assumption}

\begin{remark}
    It is easy to see that as a direct consequence of Assumption \ref{as:prop}, and the Parameter-Shift Rules \cite{psr, gpsr, ho_psr}, estimators of the coordinate-wise derivatives of $f$ are also unbiased have the same variance as above. Which is to say that for all $j \in [d]$ and $\z \in \brd$, $\mathbb{E}[\ngfj] = \gfj$, and $\mathbb{E}[(\ngfj-\gfj)^2] \leq \sigma^2$.
\end{remark}

\section{Overview of \roto}\label{sec:roto}
In this section, we provide a detailed description of the \roto algorithm, which unlike generic optimizers, leverages knowledge about the structure of parametrized quantum circuits and expectation values. When restricted to a single coordinate, the resulting univariate objective function assumes a trigonometric dependence on the free parameter. This observation forms the basis of \roto method, and is formalized through the following lemma:
\begin{lemma}[Sinusoidal form of Univariate Objective Function]\label{lemma:sin}
    For some $j \in [d]$, consider a univariate form $f_j:\br\rightarrow[-\mineig, \maxeig]$ of the objective function in Eq. \eqref{eq:obj}, where all but the $j^{\text{th}}$ coordinate of $\z$ have been fixed. Then, under the setting described in Section \ref{sec:set}:
    \begin{equation}\label{eq:sin}
        f_j(\phi) = A_j \sin(\phi+B_j)+C_j,
    \end{equation}
    where $A_j$, $B_j$, and $C_j$ are functions of $\z_{k\neq j}$. Additionally, if $\phi_\pm := \phi\pm\nicefrac{\pi}{2}$, $c:=\nicefrac{1}{2}(f_j(\phi_+)+f_j(\phi_-))$, $a:=2f_j(\phi)-2c$, and $b:= f_j(\phi_+)-f_j(\phi_-)$, then:
    \begin{align}
        &A_j = \frac{1}{2}\sqrt{a^2+b^2}, \label{eq:A} \\
        &B_j = \text{arctan2}(a, b)-\phi, \label{eq:B} \\
        &C_j = c \label{eq:C}.
    \end{align}
\end{lemma}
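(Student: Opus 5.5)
The plan is to first derive the sinusoidal form \eqref{eq:sin} directly from the circuit structure, and then verify the recovery formulas \eqref{eq:A}--\eqref{eq:C} by evaluating the sinusoid at the three points $\phi$ and $\phi_\pm$.

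\emph{Step 1: isolate the $j$-th gate.} With all $\z_{k\neq j}$ fixed, I will write $\ket{\psi(\z)} = \mathbf{W}_2\, \U_j(\phi)\, \mathbf{W}_1 \ket{\iota}$. Here $\mathbf{W}_1$ collects the gates applied before $\U_j$ and $\mathbf{W}_2$ those applied after it; both are unitaries independent of $\phi$. Set $\ket{\chi}:=\mathbf{W}_1\ket{\iota}$ and $\mathbf{M}:=\mathbf{W}_2^\dagger \Ham \mathbf{W}_2$, which is Hermitian. Then $f_j(\phi)=\braket{\chi|\U_j(\phi)^\dagger \mathbf{M}\, \U_j(\phi)|\chi}$.

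\emph{Step 2: expand the gate.} Since $\boldsymbol{\sigma}_j$ is a Pauli matrix, $\boldsymbol{\sigma}_j^2=I$, so $\U_j(\phi)=\cos(\phi/2)I - i\sin(\phi/2)\boldsymbol{\sigma}_j$. Substituting gives
\[
f_j(\phi)=\cos^2(\phi/2)\,\alpha+\sin^2(\phi/2)\,\beta+\sin(\phi/2)\cos(\phi/2)\,\gamma ,
\]
with $\alpha=\braket{\chi|\mathbf{M}|\chi}$, $\beta=\braket{\chi|\boldsymbol{\sigma}_j\mathbf{M}\boldsymbol{\sigma}_j|\chi}$ and $\gamma=i\braket{\chi|[\boldsymbol{\sigma}_j,\mathbf{M}]|\chi}$. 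All three are real, because the underlying operators are Hermitian.

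\emph{Step 3: reduce to a single sinusoid.} Applying the half-angle identities yields
\[
f_j(\phi)=\tfrac{\alpha+\beta}{2}+\tfrac{\alpha-\beta}{2}\cos\phi+\tfrac{\gamma}{2}\sin\phi .
\]
Combining the $\cos\phi$ and $\sin\phi$ terms into one phase-shifted sine gives \eqref{eq:sin}. The constants are $C_j=(\alpha+\beta)/2$ and $A_j=\tfrac12\sqrt{(\alpha-\beta)^2+\gamma^2}\ge 0$, with $B_j$ defined by $A_j\cos B_j=\gamma/2$ and $A_j\sin B_j=(\alpha-\beta)/2$. These constants depend only on $\z_{k\neq j}$, through $\mathbf{W}_1$ and $\mathbf{W}_2$.

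\emph{Step 4: recover the coefficients.} I will use $\sin(x\pm\pi/2)=\pm\cos x$ at $x=\phi+B_j$.
\begin{itemize}
\item The cosine terms cancel in the sum, so $f_j(\phi_+)+f_j(\phi_-)=2C_j$. Hence $c=C_j$, which is \eqref{eq:C}.
\item It follows that $a=2f_j(\phi)-2c=2A_j\sin(\phi+B_j)$.
\item The difference gives $b=f_j(\phi_+)-f_j(\phi_-)=2A_j\cos(\phi+B_j)$.
\end{itemize}
Then $a^2+b^2=4A_j^2$, and taking the nonnegative root gives \eqref{eq:A}. Since $(b,a)$ is a nonnegative multiple of $(\cos(\phi+B_j),\sin(\phi+B_j))$, the two-argument arctangent returns $\phi+B_j$ modulo $2\pi$. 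This gives \eqref{eq:B}, reading $\mathrm{arctan2}(a,b)$ as the angle whose sine is proportional to $a$ and whose cosine is proportional to $b$.

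The main obstacle is bookkeeping rather than depth. First, the sign of $A_j$ must be fixed as nonnegative, with any sign absorbed into $B_j$. Second, $B_j$ is determined only modulo $2\pi$, which is harmless because $f_j$ is $2\pi$-periodic. Third, the argument order of arctan2 must be matched to the convention used. Finally, the degenerate case $A_j=0$ needs a separate remark: there $B_j$ is arbitrary and \eqref{eq:sin} holds trivially.
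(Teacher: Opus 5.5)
Your proposal is correct and follows essentially the same route as the paper, whose proof simply defers to the standard derivation in Appendix A of \cite{rotosolve}. That derivation isolates the $j$-th gate and expands $\mathbf{U}_j(\phi)=\cos(\phi/2)I-i\sin(\phi/2)\boldsymbol{\sigma}_j$ using $\boldsymbol{\sigma}_j^2=I$; it then collapses the result into a single phase-shifted sinusoid and reads off $A_j$, $B_j$, $C_j$ from the evaluations at $\phi$ and $\phi\pm\pi/2$. Your extra care about the sign of $A_j$, the $2\pi$-ambiguity of $B_j$, the arctan2 convention, and the degenerate case $A_j=0$ is appropriate and fills in details the paper leaves implicit.
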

\begin{proof}
    We refer the reader to Appendix A in \cite{rotosolve} for a detailed proof of Lemma \ref{lemma:sin}.
\end{proof} 

\begin{remark}
    It was noted by Ding, et al, in \cite{rcd} that the objective functions arising from parametrized quantum circuits cannot obey the PL condition globally. The authors, thus, assumed the PL inequality to hold locally, within a neighborhood of the global minimizers. However, the validity of this assumption was not established. Equipped with the result in Lemma \ref{lemma:sin}, we verify that the local, coordinate-wise PL assumption \ref{as:pl} is indeed applicable in almost all of $\br$, and thus, $\brd$.
\end{remark}
\begin{lemma}\label{lemma:pl}
    Consider the univariate function in Eq. \eqref{eq:sin}, and let $\mathcal{X}^*:=\arg\max_{\phi \in \br}f_j(\phi)$. Under the setting of Section \ref{sec:set}, the local, coordinate-wise PL condition defined in Assumption \ref{as:pl} is satisfied at all points in $\br \setminus \mathcal{X}^*$, for all coordinates $j \in [d]$.
\end{lemma}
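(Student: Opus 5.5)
We work directly with the sinusoidal form of Lemma~\ref{lemma:sin}. Fix $j \in [d]$ and all coordinates except the $j^{\text{th}}$, so that $f_j(\phi) = A_j \sin(\phi + B_j) + C_j$ with $A_j \geq 0$. If $A_j = 0$, then $f_j$ is constant. Both sides of Eq.~\eqref{eq:cwpl} restricted to this coordinate then vanish, and the inequality holds trivially; in this case $\mathcal{X}^* = \br$, so nothing needs to be shown anyway. Assume therefore $A_j > 0$. The coordinate-wise minimum is $\fs_j = C_j - A_j$, and the maximizers are $\mathcal{X}^* = \{\phi : \sin(\phi+B_j) = 1\}$. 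The key quantities in closed form are
\begin{equation*}
\bigl(f_j'(\phi)\bigr)^2 = A_j^2\cos^2(\phi+B_j), \qquad f_j(\phi) - \fs_j = A_j\bigl(1+\sin(\phi+B_j)\bigr).
\end{equation*}

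Write $s := \sin(\phi+B_j) \in [-1,1]$. Then $\cos^2 = 1 - s^2 = (1-s)(1+s)$, and
\begin{equation*}
\bigl(f_j'(\phi)\bigr)^2 = A_j (1-s)\cdot A_j(1+s) = A_j(1-s)\,\bigl(f_j(\phi)-\fs_j\bigr).
\end{equation*}
Hence the coordinate-wise PL inequality $(\nabla_j \f)^2 \geq 2\mu_j(\f - \fs_j)$ holds at $\phi$ with
\begin{equation*}
\mu_j(\phi) = \tfrac{1}{2}A_j\bigl(1-\sin(\phi+B_j)\bigr),
\end{equation*}
and this quantity is strictly positive exactly when $\sin(\phi+B_j) \neq 1$, i.e.\ when $\phi \notin \mathcal{X}^*$. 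At the coordinate-wise minimizers, where $s=-1$, both sides vanish and the inequality holds with $\mu_j = A_j$.

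To obtain a single constant on a neighborhood, as Assumption~\ref{as:pl} requires, we restrict to sublevel sets bounded away from the maximum. Specifically, for any level $\delta < C_j + A_j$ we have $s \leq s_\delta < 1$ on $f_j^{-1}([\fs_j,\delta])$, so $\mu_j := \tfrac12 A_j(1-s_\delta) > 0$ works uniformly there. Letting $\delta$ approach the maximum value $C_j + A_j$ covers every point of $\br\setminus\mathcal{X}^*$. Since $\mathcal{X}^*$ is a discrete set, the same holds almost everywhere, and since $j$ was arbitrary, it holds for all coordinates.

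The main obstacle is the gap between $\fs_j$ and the global $\fs$ in Eq.~\eqref{eq:cwpl}. The computation above controls $\f - \fs_j$, which is the quantity actually used in the univariate, coordinate-wise sense. The first inequality in Eq.~\eqref{eq:cwpl}, involving $\f - \fs$, is stronger, because $\fs \leq \fs_j$. I would therefore state the lemma as establishing the coordinate-wise inequality with respect to $\fs_j$, which is the sense in which Lemma~\ref{lemma:pl} concerns the univariate function of Eq.~\eqref{eq:sin}. For the $\fs$ version, I would attempt a modest extension: on the neighborhood $\mathcal{N}(\Theta)$, with $\delta$ chosen small relative to the amplitudes, bound $\f - \fs$ by a constant times $\f - \fs_j$. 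I expect this bound to hold only with additional assumptions, and I would flag it as such rather than claim it in general.
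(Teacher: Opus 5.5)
Your argument is essentially the paper's: substitute the sinusoid into Eq.~\eqref{eq:cwpl} with $\fs_j = C_j - A_j$ to get the admissible constant $\mu_j \le A_j\cos^2(\phi+B_j)/\bigl(2(1+\sin(\phi+B_j))\bigr) = \tfrac12 A_j\bigl(1-\sin(\phi+B_j)\bigr)$, which is positive exactly off $\mathcal{X}^*$. Your factorization $\cos^2 = (1-s)(1+s)$ avoids the paper's $0/0$ at coordinate-wise minimizers, and your treatment of $A_j=0$ and of a uniform constant on univariate sublevel sets are refinements the paper omits. The limitation you flag is real and is shared by the paper's proof, which likewise only establishes the inequality relative to $\fs_j$, not the stronger one relative to the global $\fs$. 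Nor does the paper obtain a constant uniform over $\mathcal{N}(\Theta)\subset\brd$, since $A_j$ depends on the frozen coordinates.
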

\begin{proof}
    The proof is straightforward. We note that $\fs_j = -A_j+C_j$, and plug $f_j$ and its derivative into Eq. \eqref{eq:cwpl} to obtain:
    \begin{equation}
        \mu_j \leq \frac{A_j \cos^2(\phi+B_j)}{2(1+\sin(\phi+B_j))}.
    \end{equation}
    Now, since $\mu_j > 0$, it may easily be verified that it is possible to get a positive value of $\mu_j$ at all points except in $\mathcal{X}^*$.
\end{proof}

We are now ready to describe the \roto algorithm. At each iteration, \roto selects a coordinate $j$ and freezes all other parameters in the quantum circuit. It initializes the free coordinate to a random value \cite{rotosolve}, say $\phi$, exploits the trigonometric nature of the resulting univariate objective function $f_j(\phi)$, and performs exact minimization along this coordinate. To do this, it expends a few circuit executions to compute the constant coefficients of the sinusoidal expression in Eq. \eqref{eq:sin}. In our current setup, where the generators of the parameterized gates have only a single frequency component, the quantum circuit may be evaluated individually at three points: $\phi$, $\phi+\nicefrac{\pi}{2}$, and $\phi-\nicefrac{\pi}{2}$, from which the coefficients may be computed using equations \eqref{eq:A}, \eqref{eq:B}, and \eqref{eq:C}. The coordinate is then updated to the minimizer of this one-dimensional function, i.e., $\phi^* = \arg \min_{\phi \in \br} f_j(\phi)$.

In a more general setting, \roto has been extended to work with quantum circuits where the generators have multiple frequency components \cite{excitation_solve}. If the generators have $r$ frequencies, then $2r+1$ evaluations of the quantum circuit are necessary to determine the coefficients of the trigonometric function \cite{icd}. Variants \cite{rotosolve, improvingvariationalquantumcircuit} of the algorithm have also been proposed where, besides optimizing the parameters, an optimal combination of parametrized gates is also chosen \cite{fraxis, fqs, fqs_controlled}. Orthogonally, \cite{rotolasso} proposes the inclusion of sparsity-inducing constraints; and \cite{icd} suggests computing the aforesaid coefficients by executing the quantum circuit at equidistant points to improve robustness to noise.

We now discuss a more practically-implementable version of the \roto algorithm, and make minor modifications which facilitate the proof of its convergence. First and foremost, one only has access to stochastic estimates of the objective function, as discussed in Section \ref{sec:set}. As a result, the coefficients $A$, $B$, and $C$ cannot be computed exactly. Rather, one performs \emph{inexact} coordinate minimization based on noisy estimates $\hat{A}$, $\hat{B}$, and $\hat{C}$ of the coefficients. Next, the parameters in \roto are typically updated cyclically or sequentially. However, it is difficult to establish convergence and derive explicit rates with such updates \cite{Nesterov_coordinate_descent}. Hence, at each iteration of the algorithm, we sample the free coordinate randomly from a uniform distribution.

Next, let $\zt$ denote the value of the parameters at the $t^{\text{th}}$ iteration. Following \cite{smo}, we set the free parameter to $\zt_j$, rather than initializing it randomly as in \cite{rotosolve}. While \cite{icd} recommends executing the quantum circuit at equidistant points around $\zt_j$, we instead use shifts of $\pm \nicefrac{\pi}{2}$. Although these choices appear to be inconsequential to the performance of the algorithm (albeit in the absence of noise), they are essential for convergence analysis in Section \ref{sec:conv}. We formally state the resulting algorithm in \ref{alg:roto}.

\begin{algorithm}
	\caption{\roto} 
    \label{alg:roto}
	\begin{algorithmic}[1]
            \State \textbf{Inputs:} Initial parameters $\z^{(0)}\in \mathbb{R}^d$, Number of iterations $T>0$.
		\For {$t \in [T]$}
                \State Select a coordinate $j \sim \text{Uniform}([d])$
                \State $\zt_+ \leftarrow \zt + \frac{\pi}{2}\ej$
                \State $\zt_- \leftarrow \zt - \frac{\pi}{2}\ej$
                \State Estimate $\nft, \hat{f}(\zt_+), \hat{f}(\zt_-)$ 
                \State $a \leftarrow 2\nft - \hat{f}(\zt_+) - \hat{f}(\zt_-)$
                \State $b \leftarrow \hat{f}(\zt_+) - \hat{f}(\zt_-)$
                \State $\hat{B} \leftarrow \text{arctan2}\left( a, b \right) - \zt_j$
			    \State $\ztp \leftarrow \zt - \left(\frac{\pi}{2} + \hat{B} \right)\ej $
		\EndFor
        \State \textbf{Output:} Sequence $\{\zt\}$
	\end{algorithmic} 
\end{algorithm}

Having discussed the \roto algorithm in detail, we now turn our attention to a related coordinate descent algorithm: RCD \cite{Nesterov_coordinate_descent, Wright_coordinate_descent, rcd}. This will help us better contextualize and contrast its convergence guarantees and performance. 

In a manner similar to Algorithm \ref{alg:roto}, RCD iteratively updates the decision variables one coordinate at a time. At any iteration $t$, it randomly chooses a coordinate direction $j$, computes the derivative along that direction, and takes a small step along the negative gradient. Mathematically, the update can be expressed as $\ztp = \zt - \alpha \ngftj \ej$, where $\alpha$ is a positive step-size,  and $\ej$ is the unit vector along the $j^{\text{th}}$ direction. Further, $\ngftj$ is a stochastic estimate of the noise-free derivative $\gfj$, which is computed using the Parameter-Shift Rules \cite{psr, gpsr, ho_psr}. In the current setting, evidently, each iteration of RCD requires the execution of two different quantum circuits. The algorithm has been stated in \ref{alg:rcd}.

\begin{algorithm}
	\caption{Randomized Coordinate Descent (RCD)} 
    \label{alg:rcd}
	\begin{algorithmic}[1]
            \State \textbf{Inputs:} Initial parameters $\z^{(0)}\in \mathbb{R}^d$, Number of iterations $T>0$, Sequence of step-sizes $\{\alpha_t\}$ with $t \in [T]$.
		\For {$t \in [T]$}
                \State Select a coordinate $j \sim \text{Uniform}([d])$
                \State Estimate $\gradcap_j \ft$
                \Comment{Using the Parameter-Shift Rules \cite{psr}}
                \State $\ztp \leftarrow \zt -\alpha_t \gradcap_j \ft \ej$
		\EndFor
        \State \textbf{Output:} Sequence $\{\zt\}$
	\end{algorithmic} 
\end{algorithm}

\begin{remark}
    It is noteworthy that, unlike RCD, \roto does not depend on step-sizes, and is completely free of any hyperparameters. While the former takes a small step to ensure a reduction in the objective value at each iteration, for the given values of the other frozen parameters, \roto finds the global minimum along the chosen coordinate. As a result, while RCD is known as a coordinate \emph{descent} method, \roto is a coordinate \emph{minimization} algorithm.
\end{remark}

\section{Discussion on Convergence}\label{sec:conv}
In this section, we prove the convergence of the \roto algorithm. We also derive explicit rates of its convergence to an $\varepsilon$-stationary solution under the smoothness assumption \ref{as:smooth}, and to an $\varepsilon$-suboptimal solution under the additional assumption that the PL condition is satisfied \ref{as:pl}. 

\subsection{Iteration Complexity of \roto}\label{sec:conv_roto}
\begin{key*}
    Before proceeding with the convergence results, we take a brief pause to present a key observation about the \roto algorithm. From \cite{ho_psr}, we note that the first and second derivatives of the objective function in Eq. \eqref{eq:obj} with respect to the $j^{\text{th}}$ coordinate, are given by:
    \begin{align}
        &\nabla_j\f = \frac{1}{2}\left(f\left(\z+\frac{\pi}{2}\ej\right)-f\left(\z-\frac{\pi}{2}\ej\right)\right), \;\; \text{and,} \\
        &h_j(\z) = \frac{1}{2}\left(f\left(\z+\frac{\pi}{2}\ej\right)+f\left(\z-\frac{\pi}{2}\ej\right)-2\f\right),
    \end{align}
    respectively. Plugging them into equations \eqref{eq:A} and \eqref{eq:C}, we get:
    \begin{align}
        &A_j = \sqrt{(h_j(\z))^2+(\nabla_j\f)^2}, \;\; \text{and,} \label{eq:AA} \\
        &C_j = h_j(\z)+\f. \label{eq:CC}
    \end{align}
    This shows that \roto implicitly utilizes first and second derivatives to perform coordinate-wise minimization.
\end{key*}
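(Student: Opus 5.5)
The plan is to verify the two parameter-shift identities directly from the sinusoidal form in Lemma \ref{lemma:sin}, and then substitute them into the closed-form expressions \eqref{eq:A} and \eqref{eq:C}. Fix $j \in [d]$ and $\z$, write $\phi = \z_j$, and view $f$ as the univariate function $f_j(\phi) = A_j\sin(\phi+B_j)+C_j$ of Eq. \eqref{eq:sin}. Here $A_j, B_j, C_j$ depend only on the frozen coordinates $\z_{k\neq j}$.

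First, I will check the derivative formulas quoted from \cite{ho_psr}. Shifting the phase by $\pm\nicefrac{\pi}{2}$ gives
\begin{equation*}
f_j(\phi_\pm) = C_j \pm A_j\cos(\phi+B_j).
\end{equation*}
Half the difference of the two shifted values is therefore $A_j\cos(\phi+B_j) = f_j'(\phi)$, which is $\nabla_j\f$. Half their sum is $C_j$. Subtracting $\f = A_j\sin(\phi+B_j)+C_j$ from this half-sum leaves $-A_j\sin(\phi+B_j) = f_j''(\phi)$, which is $h_j(\z)$. Both identities are exact for every $\phi$, because $f_j$ has the single frequency $1$ (the generators are Pauli matrices).

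Second, I will substitute into the quantities defined in Lemma \ref{lemma:sin}:
\begin{itemize}
\item $c = \tfrac{1}{2}\bigl(f_j(\phi_+)+f_j(\phi_-)\bigr) = h_j(\z)+\f$;
\item $a = 2f_j(\phi)-2c = -2h_j(\z)$;
\item $b = f_j(\phi_+)-f_j(\phi_-) = 2\nabla_j\f$.
\end{itemize}
Then \eqref{eq:C} gives $C_j = c = h_j(\z)+\f$, which is Eq. \eqref{eq:CC}. Also \eqref{eq:A} gives $A_j = \tfrac{1}{2}\sqrt{4h_j(\z)^2+4(\nabla_j\f)^2}$, which simplifies to Eq. \eqref{eq:AA}. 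The sign flip in $a$ does not matter, since $a$ enters $A_j$ only through $a^2$.

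As a consistency check, I will note that $h_j(\z)^2+(\nabla_j\f)^2 = A_j^2\bigl(\sin^2(\phi+B_j)+\cos^2(\phi+B_j)\bigr) = A_j^2$. This confirms \eqref{eq:AA} independently of how \eqref{eq:A} is normalized, given that $A_j\ge 0$ by convention.

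The only delicate step is keeping the sign and normalization conventions consistent between the two sources: the definitions of $a$, $b$, $c$ in Lemma \ref{lemma:sin} and the shift-rule normalization for single-frequency Pauli generators. The identity $f_j(\phi_\pm) = C_j \pm A_j\cos(\phi+B_j)$ settles both at once, so I expect no real obstacle. Finally, I will remark that since $a$ and $b$ in Algorithm \ref{alg:roto} are built from the same three evaluations, the update in Algorithm \ref{alg:roto} is implicitly a function of $\nabla_j\f$ and $h_j(\z)$, which is the claimed interpretation.
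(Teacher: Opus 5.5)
Your proposal is correct and takes essentially the same route as the paper: obtain the two-point shift formulas for $\nabla_j f$ and $h_j$, then substitute $c = h_j + f$, $a = -2h_j$, $b = 2\nabla_j f$ into Eqs.~\eqref{eq:A} and \eqref{eq:C}. The only difference is that you derive the shift rules yourself from the sinusoidal form of Lemma~\ref{lemma:sin}, via $f_j(\phi_\pm) = C_j \pm A_j\cos(\phi+B_j)$, instead of citing \cite{ho_psr}, which makes the argument self-contained.
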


We are now ready to derive the rates of convergence for \texttt{\textbf{Rotosolve}}. We start with the following lemma which upper bounds the expected reduction in objective value from one iteration to the next.
\begin{lemma}[\roto Descent Lemma]\label{lem:roto_descent}
    Under the setting described in Section \ref{sec:set}, if Algorithm \ref{alg:roto} is executed for $T\geq 1$ iterations, then for every $t \in [T]$ the following holds true:
    \begin{equation}\label{eq:roto_lem}
        \mathbb{E}[\ftp-\ft ] \leq -\frac{\mathbb{E}[||\gft||^2]-\sigma^2}{(1+\sqrt{2})\maxeig}.
    \end{equation}
\end{lemma}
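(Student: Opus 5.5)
Fix an iteration $t$, condition on $\mathcal{F}_t$ (the history up to $\zt$), and then on the sampled coordinate $j$. By Lemma \ref{lemma:sin}, along coordinate $j$ we have $f_j(\phi)=A_j\sin(\phi+B_j)+C_j$. By the Key Observation, $A_j=\sqrt{h_j(\zt)^2+(\nabla_j\ft)^2}$ and $C_j=h_j(\zt)+\ft$.

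\textbf{Step 1: exact-step decrease.} With exact coefficients the update lands at the coordinate minimum $C_j-A_j$, giving
\begin{equation*}
\ft-f_j^*=A_j-h_j(\zt)=\frac{(\nabla_j\ft)^2}{A_j+h_j(\zt)}.
\end{equation*}
Since $A_j\le \maxeig+\mineig$ (the range of $f$) and $A_j+h_j\le 2A_j$, the decrease is at least $(\nabla_j\ft)^2/(c\,\maxeig)$. I will aim for the constant $c=1+\sqrt2$. I would obtain it by a sharper bound on $A_j+h_j$: write $h_j=A_j\sin(\zt_j+B_j)\cdot(-1)$ up to sign, and maximise $(1+\sin)$ against the range constraint. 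I expect $1+\sqrt2$ to arise as $\max(A+h)/\maxeig$ under $|f|\le\maxeig$ once the range condition is written in terms of $a$ and $b$. If that fails, I will accept a slightly different absolute constant.

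\textbf{Step 2: noisy coefficients.} The update uses $\hat B$, computed from $\hat a,\hat b$, so the new point is $\zt_j-\pi/2-\hat B$. The resulting value is
\begin{equation*}
\ftp=C_j-A_j\cos(\hat\psi-\psi),
\end{equation*}
where $\psi,\hat\psi$ are the angles $\text{arctan2}(a,b)$ and $\text{arctan2}(\hat a,\hat b)$. Hence
\begin{equation*}
\ftp-\ft=-(A_j-h_j)+A_j\bigl(1-\cos(\hat\psi-\psi)\bigr).
\end{equation*}
Bound the error term by noting that $A_j(1-\cos\Delta)\le \|(\hat a,\hat b)-(a,b)\|^2/(c'A_j)$, using the elementary geometry of angles between vectors. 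Its expectation is then at most a multiple of the shot variance. I would try to show that, after dividing by the same denominator as Step 1, it contributes $\sigma^2/((1+\sqrt2)\maxeig)$, using the Remark that each derivative estimate has variance at most $\sigma^2$. A cleaner route may be to compare directly $\ftp \le$ the value obtained by an idealised step, and expand to second order.

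\textbf{Step 3: average and conclude.} Averaging over $j\sim\text{Uniform}([d])$ gives $\frac1d\|\nabla\ft\|^2$. The stated bound has no $1/d$, so I would either absorb it into the constant or sum the coordinate contributions. I will follow the stated form as closely as possible. Taking total expectation then yields \eqref{eq:roto_lem}.

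The \textbf{main obstacle} is Step 2: controlling the angle error when $A_j$ is small. Dividing by $A_j$ could blow up, so the bound may need the trivial estimate $\ftp-\ft\le 2A_j$ in that regime, combined via a case split on whether $A_j^2\gtrsim\sigma^2$.
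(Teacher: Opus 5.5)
Your Step 1 uses the same identity as the paper, $A_j-h_j=(\nabla_j\ft)^2/(A_j+h_j)$, with the denominator bounded by a constant. The honest bound is $A_j+h_j\le 2A_j\le\maxeig+\mineig$. This is below $(1+\sqrt2)\maxeig$ only if $\mineig\le\sqrt2\maxeig$, a condition the paper never states.

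\textbf{Step 3 cannot be done as you suggest.} The factor $1/d$ from averaging over $j\sim\text{Uniform}([d])$ cannot be absorbed into the $d$-independent constant. Nor can it be removed by ``summing coordinate contributions'', since only one coordinate moves per iteration. In fact the display as written is false for $d\ge 3$. Put all gates $e^{-i\z_j Y/2}$ on a single qubit and take $\Ham=Z$, so $\f=\cos(\sum_j\z_j)$ and $\maxeig=1$. At an iterate with $\sum_j\zt_j=\pi/2$ and $\sigma=0$, every coordinate step gives $\ftp-\ft=-1$, whereas the lemma demands $\le -d/(1+\sqrt2)$. The paper's own last step (averaging $(\nabla_j\ft)^2$ over $j$) actually yields $\frac1d\mathbb{E}[\|\gft\|^2]$. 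Theorem~\ref{thm:roto_smooth}, with its factor $d$, is exactly what telescoping that version gives. You should have identified the missing $1/d$ rather than tried to force the stated form.

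\textbf{Step 2 is where the real obstruction lies.} The paper never analyses the angle error. It sets $\ftp=\hat C_j-\hat A_j$, the minimum of the \emph{estimated} sinusoid, and applies the same rationalization to the noisy $\nhjt,\ngftj$. It then bounds the noisy denominator by the deterministic $u=(1+\sqrt2)\maxeig$. Noise therefore enters only through $\mathbb{E}[(\ngftj)^2]\ge(\nabla_j\ft)^2$ and the zero-mean term $\nft-\ft$. Neither step is valid for noisy estimates with merely bounded variance.

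Your formula $\ftp=C_j-A_j\cos(\hat\psi-\psi)$ is the correct value at the new iterate, and it shows the target $\sigma^2/((1+\sqrt2)\maxeig)$ is unreachable. Conditionally on $(\zt,j)$, the error $A_j(1-\cos(\hat\psi-\psi))$ is at most $\min\{2A_j,4\sigma^2/A_j\}\le 2\sqrt2\,\sigma$ for independent evaluations, and this order is attained. Take $d=1$ and $f(\theta)=1+\epsilon\cos\theta$ (one qubit, $\Ham=I+\epsilon Z$) at $\theta=\pi$, with $\epsilon=\sigma\ll 1$ and Gaussian noise of variance $\sigma^2$ per evaluation. The gradient vanishes, and $\|(a,b)\|=2\sigma$ is comparable to the noise in $(\hat a,\hat b)$. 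So $\hat\psi-\psi$ is of order one with constant probability, and the step \emph{raises} $f$ by $\Theta(\sigma)$ in expectation. That is far more than the $\sigma^2/((1+\sqrt2)\maxeig)$ the lemma allows.

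Hence no case split on $A_j^2$ versus $\sigma^2$ can close Step 2. Carried out correctly, your route proves only the corrected inequality $\mathbb{E}[\ftp-\ft]\le -\mathbb{E}[\|\gft\|^2]/(d(\maxeig+\mineig))+2\sqrt2\,\sigma$. Its $\Or(\sigma)$ noise term cannot in general be improved to $\Or(\sigma^2/\maxeig)$.
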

\begin{proof}
    At iteration $t$, let the random coordinate $j$ be selected. It is easy to note that \cite{rotosolve}:
    \begin{equation*}
        \ftp = \fs_j = -\hat{A}_j + \hat{C}_j.
    \end{equation*}
    Substituting the values of $\hat{A_j}$ and $\hat{C}_j$ from equations \eqref{eq:AA} and \eqref{eq:CC}, and subtracting $\ft$ from both sides, we have:
    \begin{equation}\label{eq:1}
    \begin{aligned}
        \ftp - \ft = & -\sqrt{(\nhjt)^2+(\ngftj)^2} \\ &+\nhjt + (\nft-\ft).
    \end{aligned}
    \end{equation}
    Now, we note that:
    \begin{equation*}
    \begin{aligned}
        -\sqrt{(\nhjt)^2+(\ngftj)^2} +\nhjt = \\ -\frac{(\ngftj)^2}{\sqrt{(\nhjt)^2+(\ngftj)^2} +\nhjt}
    \end{aligned}
    \end{equation*}
    The denominator of the r.h.s of the above equation is evidently upper bounded by $u:= (1+\sqrt{2})\maxeig$. Plugging this back into Eq. \eqref{eq:1},
    \begin{equation}
        \ftp - \ft = -\frac{(\ngftj)^2}{u} + (\nft-\ft).
    \end{equation}
    Next, we take an expectation conditioned on $\zt$, and $j$:
    \begin{equation*}
    \mathbb{E}[\ftp-\ft \; \lvert \; \zt, j ] \leq -\frac{(\nabla_j\ft)^2}{u} + \frac{\sigma^2}{u},
    \end{equation*}
    where, we have made use of Assumption \ref{as:prop}. Finally, taking an expectation w.r.t. $j$, followed by $\zt$, gives us the required result.
\end{proof}

Now, we consider the smooth, non-convex case, and find the convergence rate of \roto to an $\varepsilon$-stationary point:
\begin{theorem}[Convergence of \roto to an $\varepsilon$-stationary point]\label{thm:roto_smooth}
    Under Assumption \ref{as:prop}, if Algorithm \ref{alg:roto} is run for $T\geq 1$ iterations, then:
    \begin{equation}
        \min_{t \in [T]} \mathbb{E}[||\nabla \ft||^2] \leq \frac{(1+\sqrt{2})\maxeig d}{T}(f(\z^{(0)})-\fs) + d \sigma^2.
    \end{equation}
    Further, for any small $\varepsilon>0$, if $\sigma^2 \leq \nicefrac{\varepsilon^2}{2d}$, then it can be guaranteed that $\min_{t \in [T]} \mathbb{E}[||\nabla \ft||^2] \leq \varepsilon^2$ in $T$ iterations, where,
    \begin{equation}
        T \geq \frac{2(1+\sqrt{2})d\maxeig}{\varepsilon^2}\left(f(\z^{(0)})-\fs\right).
    \end{equation}
\end{theorem}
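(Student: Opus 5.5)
The plan is to telescope the one-step bound of Lemma \ref{lem:roto_descent} over $t$. The $d$ in the statement appears only after the expectation over the uniformly sampled coordinate is taken properly. In the last step of the proof of Lemma \ref{lem:roto_descent}, conditioning on $\zt$ and averaging over $j \sim \text{Uniform}([d])$ gives $\mathbb{E}_j[(\nabla_j\ft)^2] = \frac{1}{d}||\gft||^2$. So I will use the descent inequality in the form
\begin{equation*}
\mathbb{E}[\ftp-\ft] \leq -\frac{1}{u}\left(\frac{1}{d}\mathbb{E}[||\gft||^2]-\sigma^2\right), \qquad u:=(1+\sqrt{2})\maxeig .
\end{equation*}
I will re-derive this line explicitly, since the $\nicefrac{1}{d}$ factor is what produces the dimension dependence in the theorem.

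Next I rearrange to isolate the gradient norm: $\mathbb{E}[||\gft||^2] \leq ud\,\mathbb{E}[\ft-\ftp] + d\sigma^2$. Summing over $t\in[T]$ telescopes the right-hand side to $ud\,(f(\z^{(0)})-\mathbb{E}[f(\z^{(T)})]) + Td\sigma^2$. I then use $f(\z^{(T)}) \geq \fs$, which holds deterministically by the definition of $\fs$ as the global minimum. Dividing by $T$ and using that the minimum of a finite collection is at most its average gives
\begin{equation*}
\min_{t\in[T]}\mathbb{E}[||\gft||^2] \leq \frac{ud}{T}(f(\z^{(0)})-\fs) + d\sigma^2,
\end{equation*}
which is the first claim. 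The index bookkeeping needs care: if $[T]$ starts at $1$, the telescoped sum starts at $f(\z^{(1)})$. Since each step decreases the objective in expectation up to the noise term, I will either index from $0$ or absorb the shift, checking that the constant is unchanged.

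For the second claim I split $\varepsilon^2$ evenly between the two terms. The hypothesis $\sigma^2\leq \nicefrac{\varepsilon^2}{2d}$ gives $d\sigma^2 \leq \nicefrac{\varepsilon^2}{2}$. The stated lower bound on $T$ gives $\frac{ud}{T}(f(\z^{(0)})-\fs) \leq \nicefrac{\varepsilon^2}{2}$. Adding the two yields $\min_t \mathbb{E}[||\gft||^2]\leq \varepsilon^2$.

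The main obstacle is the validity of the descent lemma itself when the coefficients are noisy. The step $\ftp = -\hat{A}_j+\hat{C}_j$ uses the estimated coefficients, whereas the true new value is the exact $f$ evaluated at the noisy minimizer. Also, the expectation of $(\ngftj)^2$ equals $(\nabla_j\ft)^2$ plus the variance, so the sign of the $\sigma^2$ term must be checked. I will take Lemma \ref{lem:roto_descent} as given, as the paper permits, with the $\nicefrac{1}{d}$ correction noted above. The remaining argument is then purely the telescoping computation.
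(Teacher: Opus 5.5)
Your proposal is correct and follows the paper's own argument: telescope the \roto descent lemma, discard $\mathbb{E}[f(\z^{(T)})]$ using $f(\z^{(T)})\geq \fs$, bound the minimum by the average, and split $\varepsilon^2$ evenly between the two terms. Your $\nicefrac{1}{d}$ factor is exactly what averaging over the uniformly sampled coordinate yields, and it is what the theorem's $d$-dependence reflects (the lemma as printed omits it, but that printed form is only stronger and would imply yours). The caveats you raise about the lemma's noisy step, e.g.\ $f(\ztp)$ versus $-\hat{A}_j+\hat{C}_j$, are legitimate, but they concern the lemma itself, which the paper's proof of this theorem also takes as given.
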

\begin{proof}[Proof sketch]
    We start with the result in Lemma \ref{lem:roto_descent} and unroll the recurrence relation in Eq. \ref{eq:roto_lem}. This leads to a telescopic cancellation of the $\ft$ terms. Next, we note that $\fs \leq f(\z^{(T)})$. Finally, we assert that the upper bound on $\min_{t \in [T]} \mathbb{E}[||\nabla \ft||^2]$ should be less than $\varepsilon^2$ to obtain the result.
\end{proof}

Finally, we find the iteration complexity of \roto to reach an $\varepsilon$-suboptimal solution:
\begin{theorem}[Convergence of \roto to an $\varepsilon$-suboptimal solution]\label{thm:roto_pl}
    In the Setting of Section \ref{sec:set}, let Assumptions \ref{as:pl} and \ref{as:prop} be satisfied. If Algorithm \ref{alg:roto} is run for $T\geq 1$ iterations, then for every $t \in [T]$:
    \begin{equation}
    \begin{aligned}
        \mathbb{E}[\ftp-\fs] \leq &\left(1-\frac{2\mu}{(1+\sqrt{2})d\maxeig} \right) \mathbb{E}[\ft-\fs] \\
        &+ \frac{\sigma^2}{(1+\sqrt{2})\maxeig}.
    \end{aligned}
    \end{equation}
    Further, for any small $\varepsilon > 0$, if $\sigma^2 \leq \nicefrac{\varepsilon\mu}{d}$, Algorithm \ref{alg:roto} is guaranteed to reach an $\varepsilon$-stationary solution in $T$ iterations, where,
    \begin{equation}
        T \geq \frac{(1+\sqrt{2})d\maxeig}{2\mu}\ln\left( \frac{2(f(\z^{(0)})-\fs)}{\varepsilon}\right).
    \end{equation}
\end{theorem}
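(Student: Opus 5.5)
The plan is to start from the \roto Descent Lemma (Lemma \ref{lem:roto_descent}), but at the stage before the expectation over $j$ is taken. Conditioned on $\zt$, the proof of that lemma gives
\begin{equation*}
\mathbb{E}[\ftp-\ft \mid \zt] \leq -\frac{\mathbb{E}_j[(\nabla_j\ft)^2]-\sigma^2}{u}, \qquad u:=(1+\sqrt{2})\maxeig .
\end{equation*}
Since $j$ is uniform on $[d]$, we have $\mathbb{E}_j[(\nabla_j\ft)^2]=\frac{1}{d}||\gft||^2$. This is the form consistent with the $d$ appearing in the statement, so I will state the conditional descent inequality with the factor $1/d$ on the squared gradient norm. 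For the noise term, I will keep it as $\sigma^2/u$, matching the statement (the lemma's noise bound does not pick up a factor of $d$).

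Next I apply the PL inequality \eqref{eq:pl} from Assumption \ref{as:pl}: $||\gft||^2 \geq 2\mu(\ft-\fs)$. Substituting and subtracting $\fs$ from both sides gives
\begin{equation*}
\mathbb{E}[\ftp-\fs\mid\zt]\leq \Bigl(1-\frac{2\mu}{ud}\Bigr)(\ft-\fs)+\frac{\sigma^2}{u}.
\end{equation*}
Taking total expectation then yields the claimed one-step recursion. The main obstacle is that Assumption \ref{as:pl} is only local, holding on $\mathcal{N}(\Theta)$. I would handle this as follows. The iterates are monotone up to noise, since Lemma \ref{lem:roto_descent} shows the expected objective does not increase by more than $\sigma^2/u$ per step. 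So I would assume, as in \cite{rcd}, that the iterates remain in $\mathcal{N}(\Theta)$, for instance because $f(\z^{(0)})\leq\delta$. Alternatively, I would invoke Lemma \ref{lemma:pl}, which establishes the coordinate-wise PL condition away from the measure-zero set of coordinate maximizers, and then sum $\mu=\sum_j\mu_j$.

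To get the iteration count, I unroll the recursion. Write $\rho:=1-\frac{2\mu}{ud}$ and $\Delta_t:=\mathbb{E}[\ft-\fs]$. Then
\begin{equation*}
\Delta_T\leq\rho^T\Delta_0+\frac{\sigma^2}{u}\sum_{k\geq0}\rho^k=\rho^T\Delta_0+\frac{\sigma^2 d}{2\mu},
\end{equation*}
where the geometric series sums to $\frac{\sigma^2}{u}\cdot\frac{ud}{2\mu}$. With $\sigma^2\leq\varepsilon\mu/d$, the noise floor is at most $\varepsilon/2$. For the contraction term, I use $\rho^T\leq e^{-2\mu T/(ud)}$, so requiring $e^{-2\mu T/(ud)}\Delta_0\leq\varepsilon/2$ gives $T\geq\frac{ud}{2\mu}\ln\frac{2(f(\z^{(0)})-\fs)}{\varepsilon}$. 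This is the stated bound, and it yields an $\varepsilon$-suboptimal point. Finally, I would check that $\rho\in[0,1)$. This holds because $2\mu/d\leq 2\bar{L}$-type bounds make $\mu/d\lesssim\maxeig$, so $\rho\geq 0$ is consistent.
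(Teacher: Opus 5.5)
Your main line matches the paper's argument and is correct as far as it goes. You average the conditional bound over the uniform coordinate to get $-\|\gft\|^2/(ud)+\sigma^2/u$, which restores the $1/d$ that the displayed statement of the \roto Descent Lemma omits but its proof and Theorem~\ref{thm:roto_smooth} require. You then insert the global PL inequality \eqref{eq:pl}, subtract $\fs$, and unroll with $\frac{\sigma^2}{u}\cdot\frac{ud}{2\mu}=\frac{d\sigma^2}{2\mu}\le\frac{\varepsilon}{2}$. This is valid provided \eqref{eq:pl} holds at $\zt$ on every sample path, which the paper's sketch assumes without comment. The gap is in the two ways you propose to justify that assumption.

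\textbf{The fallback through Lemma~\ref{lemma:pl} cannot work.} That lemma only certifies $(\nabla_j\f)^2\ge 2\mu_j(\f-\fs_j)$, the weaker of the two inequalities chained in \eqref{eq:cwpl}, because $\f-\fs_j\le\f-\fs$. Summing over $j$ therefore gives $\|\gf\|^2\ge 2\sum_j\mu_j(\f-\fs_j)$, not $\|\gf\|^2\ge 2\mu(\f-\fs)$. At a point that is minimal along every coordinate but not globally optimal (a spurious local minimum), the lemma holds trivially while \eqref{eq:pl} fails. So no contraction towards $\fs$ can be extracted from it. Its constant is also point-dependent: the bound in its proof simplifies to $\mu_j\le\tfrac12 A_j\bigl(1-\sin(\phi+B_j)\bigr)$. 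This vanishes as $\phi$ approaches $\mathcal{X}^*$ and varies with the frozen coordinates through $A_j$, so there is no uniform $\mu_j>0$ to sum.

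\textbf{Your primary justification is not sound either.} The bound $\mathbb{E}[\ftp\mid\zt]\le\ft+\sigma^2/u$ only controls conditional means. Even with $f(\z^{(0)})\le\delta$, a single noisy step can exit $\mathcal{N}(\Theta)$. Replacing $\mathbb{E}\|\gft\|^2$ by $2\mu\,\mathbb{E}[\ft-\fs]$, however, requires PL to hold pathwise. You should either impose containment of the entire trajectory in $\mathcal{N}(\Theta)$ as an explicit hypothesis, or prove it holds with high probability (e.g.\ via a stopping-time argument) and state the rate on that event.

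\textbf{A smaller point.} $L_j=\Or(\maxeig)$ does not give the exact inequality $2\mu\le ud$ that the unrolling needs ($\rho\ge 0$). Within the paper's framework you can get it by applying the descent inequality with exact evaluations ($\sigma=0$) and using $\ftp\ge\fs$. This gives $\|\gft\|^2\le ud(\ft-\fs)$, and comparing with \eqref{eq:pl} yields $2\mu\le ud$.
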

\begin{proof}[Proof sketch]
    Similar to before, we start with the result in Lemma \ref{lem:roto_descent} and plug in the PL condition from Assumption \ref{as:pl}, followed the subtraction of $\fs$ from both sides of the resulting inequality. Next, we take a telescoping sum, and assert that the expected optimality gap at the final iteration should be less than $\varepsilon$.
\end{proof}

\subsection{Iteration Complexity of RCD}
For ease of comparison with \texttt{\textbf{Rotosolve}}, we briefly touch upon the iteration complexity of RCD. 
First, we state a coordinate-wise version of the Descent Lemma that is central to proving convergence for smooth functions.
\begin{lemma}[Coordinate-wise Descent Lemma]
    In the setting of Section \ref{sec:set}, let Assumption \ref{as:smooth} be satisfied. Then for all $\z \in \brd$, $h \in \br$, and $j \in [d]$, the following holds true:
    \begin{equation}
        f(\z+h\ej) \leq \f + h\braket{\gf, \ej} + \frac{L_jh^2}{2}.
    \end{equation}
\end{lemma}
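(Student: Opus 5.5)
The plan is to reduce the inequality to a one-dimensional statement along the $j^{\text{th}}$ coordinate and then use the fundamental theorem of calculus together with the coordinate-wise Lipschitz bound from Assumption \ref{as:smooth}. Fix $\z \in \brd$, $j \in [d]$ and $h \in \br$, and define the univariate function $g(s) := f(\z + s\ej)$ for $s \in \br$. By Lemma \ref{lemma:sin}, $g$ is a shifted sinusoid, so it is smooth. By the chain rule, $g'(s) = \nabla_j f(\z + s\ej) = \braket{\nabla f(\z+s\ej), \ej}$, and in particular $g'(0) = \braket{\gf, \ej}$.

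Next I would write
\begin{equation*}
f(\z+h\ej) - \f = g(h)-g(0) = \int_0^1 h\, g'(\tau h)\, d\tau ,
\end{equation*}
and add and subtract $h g'(0)$ to obtain
\begin{equation*}
f(\z+h\ej) = \f + h\braket{\gf,\ej} + \int_0^1 h\bigl(\nabla_j f(\z+\tau h\ej) - \gfj\bigr)\, d\tau .
\end{equation*}

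The remaining step is to bound the integral term. Its integrand is at most $|h|\,\bigl|\nabla_j f(\z+\tau h \ej)-\gfj\bigr|$. Applying the coordinate-wise smoothness inequality of Assumption \ref{as:smooth} with increment $\tau h$ bounds this by $|h|\, L_j \tau |h| = L_j \tau h^2$. Integrating over $\tau \in [0,1]$ gives $\int_0^1 L_j \tau h^2\, d\tau = L_j h^2/2$, which yields the claimed inequality.

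I do not expect a genuine obstacle here. The only points needing care are that the increment used in the smoothness assumption must lie along $\ej$ (it does, namely $\tau h \ej$), and that $g$ must be differentiable so that the integral representation is valid. Lemma \ref{lemma:sin} settles the latter, and in any case Assumption \ref{as:smooth} already presupposes that $\nabla_j f$ exists and is Lipschitz along the coordinate.
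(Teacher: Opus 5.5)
Your proof is correct and follows the paper's approach: the paper simply cites the standard descent lemma (Lemma 2.25 of the Garrigos--Gower handbook) applied to the univariate restriction of $f$ along $\mathbf{e}_j$. Its proof is exactly the fundamental-theorem-of-calculus argument with the coordinate-wise Lipschitz bound that you wrote out explicitly.
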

\begin{proof}
    This is a corollary of Lemma 2.25 in \cite{sgd_handbook} along a single dimension.
\end{proof}

The following Lemma and Theorems are direct analogues of those in Section \ref{sec:conv_roto}, and their proofs proceed similarly.
\begin{lemma}[RCD Descent Lemma]\label{lem:roto_descent}
    Under the setting described in Section \ref{sec:set}, if Algorithm \ref{alg:roto} is executed with a step-size $\alpha>0$ for $T\geq 1$ iterations, then for every $t \in [T]$ the following holds true:
    \begin{equation}\label{eq:roto_lem}
        \mathbb{E}[\ftp-\ft ] \leq -\frac{\alpha}{2d}\mathbb{E}[||\gft||^2]+\frac{\bar{L}\alpha^2\sigma^2}{2}.
    \end{equation}
\end{lemma}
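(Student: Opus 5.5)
The plan is the standard single-step argument for randomized coordinate descent: apply the Coordinate-wise Descent Lemma along the sampled coordinate, then take conditional expectations in two stages, first over the shot noise and then over the coordinate. The statement refers to Algorithm \ref{alg:roto}, but the update is the RCD step $\ztp = \zt - \alpha \ngftj \ej$ of Algorithm \ref{alg:rcd} with constant step-size $\alpha$, so this is the iteration we analyze.

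Fix $t$ and condition on $\zt$ and on the sampled coordinate $j$. Applying the Coordinate-wise Descent Lemma with $h = -\alpha\ngftj$ gives
\begin{equation*}
\ftp \leq \ft - \alpha \ngftj \nabla_j\ft + \frac{L_j\alpha^2}{2}(\ngftj)^2 .
\end{equation*}
Next we take expectation over the shot noise, still conditioning on $(\zt,j)$. By the remark following Assumption \ref{as:prop}, the derivative estimator is unbiased with variance at most $\sigma^2$. Hence the cross term becomes $-\alpha(\nabla_j\ft)^2$, and $\mathbb{E}[(\ngftj)^2 \mid \zt, j] \leq (\nabla_j\ft)^2 + \sigma^2$.

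We then average over $j \sim \text{Uniform}([d])$. The linear term gives $-\frac{\alpha}{d}\|\gft\|^2$. The quadratic term gives $\frac{\alpha^2}{2d}\sum_j L_j\big((\nabla_j\ft)^2+\sigma^2\big)$. Its noise part equals exactly $\frac{\alpha^2\sigma^2}{2}\cdot\frac{1}{d}\sum_j L_j = \frac{\bar{L}\alpha^2\sigma^2}{2}$, using $L=\sum_j L_j$ and $\bar L = L/d$. This is precisely the noise term in the claim.

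The step I expect to be the main obstacle is the gradient part of the quadratic term. We bound it by $\frac{L_{\max}\alpha^2}{2d}\|\gft\|^2$ and require $\alpha \leq 1/L_{\max}$, so that
\begin{equation*}
-\frac{\alpha}{d}\|\gft\|^2+\frac{L_{\max}\alpha^2}{2d}\|\gft\|^2 \leq -\frac{\alpha}{2d}\|\gft\|^2 .
\end{equation*}
This step-size restriction, $\alpha \leq 1/L_{\max}$, is implicit in the statement and must be made explicit; without it the claimed coefficient $-\frac{\alpha}{2d}$ cannot be obtained. Under this condition, taking total expectation over $\zt$ yields Eq.~\eqref{eq:roto_lem}.
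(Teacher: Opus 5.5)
Your argument is correct and is the standard route the paper has in mind when it says the RCD proofs ``proceed similarly''. You apply the Coordinate-wise Descent Lemma with $h=-\alpha\hat{\nabla}_j f(\boldsymbol{\theta}^{(t)})$, take expectation over shot noise and then over the uniformly sampled coordinate, and this yields the noise term $\bar{L}\alpha^2\sigma^2/2$ exactly.

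You are also right that the statement should refer to Algorithm~\ref{alg:rcd}, and that it needs the restriction $\alpha \leq 1/L_{\text{max}}$, which the paper only imposes later in Theorems~\ref{thm:rcd_smooth} and~\ref{thm:rcd_pl}. Without it the bound fails: for $\sigma=0$ the right-hand side is unbounded below in $\alpha$, while the left-hand side is bounded.
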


Now, we consider the smooth, non-convex case, and find the convergence rate of RCD to an $\varepsilon$-stationary point:
\begin{theorem}[Convergence of RCD to an $\varepsilon$-stationary point]\label{thm:rcd_smooth}
    Under Assumption \ref{as:prop}, if Algorithm \ref{alg:rcd} is executed for $T\geq 1$ iterations with a fixed step-size $\alpha \leq \nicefrac{1}{L_{\text{max}}}$, then:
    \begin{equation}
        \min_{t \in [T]} \mathbb{E}[||\nabla \ft||^2] \leq \nicefrac{2d}{\alpha T}(f(\z^{(0)})-\fs) + \bar{L} \alpha d \sigma^2.
    \end{equation}
    Consequently, it can be guaranteed that for any small $\varepsilon>0$, if $\sigma^2 \leq \nicefrac{\varepsilon^2}{2d}$, then $\min_{t \in [T]} \mathbb{E}[||\nabla \ft||^2] \leq \varepsilon^2$, and,
    \begin{equation}
        T \geq \frac{4dL_{\text{max}}}{\varepsilon^2}\left(f(\z^{(0)})-\fs\right).
    \end{equation}
\end{theorem}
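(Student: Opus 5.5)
The plan is to follow the same template as the proof of Theorem \ref{thm:roto_smooth}, starting from the RCD Descent Lemma stated just above. That lemma gives
\[
\mathbb{E}[f(\boldsymbol{\theta}^{(t+1)})-f(\boldsymbol{\theta}^{(t)})] \leq -\frac{\alpha}{2d}\mathbb{E}[||\nabla f(\boldsymbol{\theta}^{(t)})||^2]+\frac{\bar{L}\alpha^2\sigma^2}{2}.
\]
Let me first sketch where this lemma comes from, to check that the constants are consistent with $\alpha \leq \nicefrac{1}{L_{\text{max}}}$. Apply the Coordinate-wise Descent Lemma with $h=-\alpha\hat{\nabla}_j f(\boldsymbol{\theta}^{(t)})$. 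Take the conditional expectation over the shot noise, which is unbiased with variance $\leq\sigma^2$. This gives $-\alpha(\nabla_j f)^2+\frac{L_j\alpha^2}{2}((\nabla_j f)^2+\sigma^2)$. Using $L_j\alpha\leq 1$, the gradient terms are bounded by $-\frac{\alpha}{2}(\nabla_j f)^2$. Averaging over the uniform choice of $j$ turns $(\nabla_j f)^2$ into $\frac{1}{d}||\nabla f||^2$ and $L_j$ into $\bar{L}$. This is where the hypothesis on the step-size enters.

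Next, I would rearrange the lemma to isolate $\mathbb{E}[||\nabla f(\boldsymbol{\theta}^{(t)})||^2]$. I would then sum over $t=0,\dots,T-1$ (equivalently over $t\in[T]$), so that the $f$ terms telescope to $f(\boldsymbol{\theta}^{(0)})-\mathbb{E}[f(\boldsymbol{\theta}^{(T)})]$. Bounding this by $f(\boldsymbol{\theta}^{(0)})-f^*$ using $f^*\leq f(\boldsymbol{\theta}^{(T)})$, and multiplying by $\nicefrac{2d}{\alpha}$, gives
\[
\sum_{t}\mathbb{E}[||\nabla f(\boldsymbol{\theta}^{(t)})||^2]\leq \frac{2d}{\alpha}(f(\boldsymbol{\theta}^{(0)})-f^*)+T\bar{L}\alpha d\sigma^2 .
\]
Bounding the minimum by the average, i.e.\ dividing by $T$, yields the first claimed inequality.

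For the complexity statement, I would set $\alpha=\nicefrac{1}{L_{\text{max}}}$. The noise term then becomes $\frac{\bar{L}}{L_{\text{max}}}d\sigma^2\leq d\sigma^2\leq \nicefrac{\varepsilon^2}{2}$, using $\bar{L}\leq L_{\text{max}}$ and the assumption $\sigma^2\leq\nicefrac{\varepsilon^2}{2d}$. It then suffices to make the first term at most $\nicefrac{\varepsilon^2}{2}$, i.e.\ $\frac{2dL_{\text{max}}}{T}(f(\boldsymbol{\theta}^{(0)})-f^*)\leq \nicefrac{\varepsilon^2}{2}$. This is exactly the stated condition $T\geq \frac{4dL_{\text{max}}}{\varepsilon^2}(f(\boldsymbol{\theta}^{(0)})-f^*)$.

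I expect the only delicate point to be the derivation of the descent lemma itself. Specifically, it must correctly handle the second moment $\mathbb{E}[(\hat{\nabla}_j f)^2]=(\nabla_j f)^2+\mathrm{Var}$, and it must average $L_j$ against the uniform distribution so that $\bar{L}$, not $L_{\text{max}}$, appears in the noise term. Everything after that is routine telescoping.
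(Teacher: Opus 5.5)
Your proposal is correct and follows the paper's route: telescope the RCD descent lemma, use $f^* \leq f(\boldsymbol{\theta}^{(T)})$, bound the minimum by the average, then take $\alpha = 1/L_{\text{max}}$ and give each of the two terms an $\varepsilon^2/2$ budget. Your derivation of the descent lemma also checks out: it uses $L_j\alpha \leq 1$ to keep half of the gradient term and averages $L_j$ over the uniform coordinate to get $\bar{L}$. This supplies a step the paper only asserts by analogy.
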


Finally, we find the iteration complexity of RCD to reach an $\varepsilon$-suboptimal solution:
\begin{theorem}[Convergence of RCD to an $\varepsilon$-suboptimal solution]\label{thm:rcd_pl}
    In the Setting of Section \ref{sec:set}, and under Assumptions \ref{as:pl} and \ref{as:prop}, if Algorithm \ref{alg:rcd} is executed for $T\geq 1$ iterations with a fixed step-size $\alpha \leq \nicefrac{1}{L_{\text{max}}}$, then for every $t \in [T]$:
    \begin{equation}
        \mathbb{E}[\ftp-\fs] \leq \left(1-\frac{\alpha\mu}{d} \right) \mathbb{E}[\ft-\fs] + \frac{\bar{L}\alpha^2\sigma^2}{2}.
    \end{equation}
    Further, if $\sigma^2 \leq \nicefrac{\varepsilon\mu}{d}$, Algorithm \ref{alg:rcd} reaches an $\varepsilon$-suboptimal solution in $T$ iterations, where,
    \begin{equation}
        T \geq \frac{dL_{\text{max}}}{\mu} \ln\left(\frac{2(f(\z^{(0)})-\fs)}{\varepsilon} \right).
    \end{equation}
\end{theorem}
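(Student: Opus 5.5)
The argument follows the same template as the proof of Theorem \ref{thm:roto_pl}, but with the RCD Descent Lemma in place of the \roto one. I will first justify the RCD Descent Lemma, then plug in the PL inequality to get the one-step contraction, and finally unroll the recursion.

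\emph{Step 1: the RCD Descent Lemma.} Condition on $\zt$ and on the sampled coordinate $j$. Apply the Coordinate-wise Descent Lemma with $h=-\alpha\ngftj$:
\begin{equation*}
\ftp \leq \ft - \alpha \nabla_j\ft\,\ngftj + \frac{L_j\alpha^2}{2}(\ngftj)^2 .
\end{equation*}
Take the expectation over the shot noise. Unbiasedness gives $\mathbb{E}[(\ngftj)^2]\leq(\nabla_j\ft)^2+\sigma^2$, so
\begin{equation*}
\mathbb{E}[\ftp-\ft \,\vert\, \zt, j] \leq -\alpha\Bigl(1-\frac{L_j\alpha}{2}\Bigr)(\nabla_j\ft)^2+\frac{L_j\alpha^2\sigma^2}{2}.
\end{equation*}
With $\alpha\leq \nicefrac{1}{L_{\text{max}}}$ the bracket is at least $\nicefrac12$. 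Averaging over the uniformly sampled $j$ turns $(\nabla_j\ft)^2$ into $\nicefrac{1}{d}\,||\gft||^2$ and turns $L_j$ into $\nicefrac{1}{d}\sum_j L_j=\bar{L}$. This yields exactly the RCD Descent Lemma stated above.

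\emph{Step 2: the one-step contraction.} Invoke the full-gradient PL inequality \eqref{eq:pl}, $||\gft||^2\geq 2\mu(\ft-\fs)$. Substituting into the descent bound, subtracting $\fs$ from both sides and taking total expectation gives
\begin{equation*}
\mathbb{E}[\ftp-\fs]\leq\Bigl(1-\frac{\alpha\mu}{d}\Bigr)\mathbb{E}[\ft-\fs]+\frac{\bar{L}\alpha^2\sigma^2}{2}.
\end{equation*}
This is the first claim of the theorem.

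\emph{Main obstacle.} The PL condition is only local: it holds on $\mathcal{N}(\Theta)$, not on all of $\brd$. I would handle this the way the paper does for \roto. First, assume the iterates remain in $\mathcal{N}(\Theta)$, which Lemma \ref{lemma:pl} makes reasonable almost everywhere. Second, note that $\alpha\leq\nicefrac{1}{L_{\text{max}}}$ and $\mu_j\leq L_j$ give $\nicefrac{\alpha\mu}{d}\leq 1$, so the contraction factor is nonnegative.

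\emph{Step 3: unrolling the recursion.} Write $\rho:=1-\nicefrac{\alpha\mu}{d}$. Unrolling and summing the geometric series gives
\begin{equation*}
\mathbb{E}[f(\z^{(T)})-\fs]\leq \rho^T\bigl(f(\z^{(0)})-\fs\bigr)+\frac{d\bar{L}\alpha\sigma^2}{2\mu}.
\end{equation*}
Now set $\alpha=\nicefrac{1}{L_{\text{max}}}$ and use $\bar{L}\leq L_{\text{max}}$. The noise floor is then at most $\nicefrac{d\sigma^2}{2\mu}$, which is at most $\nicefrac{\varepsilon}{2}$ when $\sigma^2\leq\nicefrac{\varepsilon\mu}{d}$. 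For the transient term, use $\rho^T\leq e^{-T\mu/(dL_{\text{max}})}$. This is at most $\nicefrac{\varepsilon}{2(f(\z^{(0)})-\fs)}$ once
\begin{equation*}
T\geq \frac{dL_{\text{max}}}{\mu}\ln\Bigl(\frac{2(f(\z^{(0)})-\fs)}{\varepsilon}\Bigr).
\end{equation*}
Adding the two halves gives $\mathbb{E}[f(\z^{(T)})-\fs]\leq\varepsilon$, which completes the argument.
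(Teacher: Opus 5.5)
Your proposal is correct and takes essentially the route the paper indicates: the RCD Descent Lemma, then the PL inequality \eqref{eq:pl}, subtract $\fs$, unroll the recursion, and split $\varepsilon$ evenly between the transient term and the noise floor $\nicefrac{d\bar{L}\alpha\sigma^2}{2\mu}$ with $\alpha=\nicefrac{1}{L_{\text{max}}}$; you also supply the derivation of the RCD Descent Lemma, which the paper only states. Your explicit assumption that the iterates remain in $\mathcal{N}(\Theta)$ is the same step the paper takes tacitly when it plugs in the local PL condition, and fixing $\alpha=\nicefrac{1}{L_{\text{max}}}$ is the choice the stated bound on $T$ implicitly requires.
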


\subsection{Comparison of Convergence Rates}\label{sec:shots}
Let us now compare the convergence rates of \roto against those of RCD. First, we reiterate that $L_{\text{max}} = \Or(\maxeig)$. It is easy to note that if the conditions on noise-variance in Theorems \ref{thm:roto_smooth}-\ref{thm:rcd_pl} are satisfied, then \roto and RCD have equivalent iteration complexities. Further, if $n$ shots are used to estimate the objective values, then the variance is bounded by $\nicefrac{\maxeig^2}{n}$. This also makes the shot-complexity of the two methods equal, up to a small constant factor. However, these are worst-case results, and may not reflect practical performance, which we verify through the experiments in Section \ref{sec:exp}.

\section{Experiments and Results}\label{sec:exp}
In this section, we benchmark the performance of \roto against that of SGD, RCD, RSGF, and SPSA, on a binary classification task. While \roto has been tested extensively on applications in Chemistry and Combinatorial Optimization, to the best of our knowledge, the application of \roto to QML has not been explored in prior work. First, we extend \roto to be compatible with objective functions having a finite-sum structure \cite{finite_sum}, which are frequently encountered in machine learning. Let $\{(\x_k, y_k)_{k \in [K]}\}$ denote a training dataset, where $\x_k \in \br^m$ are the input features and $y_k \in \{\pm 1\}$ is the ground-truth corresponding to the $k^{\text{th}}$ data point. Let the value $f(\x_k, \z) \in [-1, 1]$ measured from a quantum circuit adhere to the assumptions in Section \ref{sec:set}. For simplicity, we consider the loss \cite{ssd}:
\begin{equation}
    \loss(\z) = \frac{1}{2K}\sum_{k=1}^K (1-y_k f(\x_k, \z)).
\end{equation}
Now, if all but the $j^{\text{th}}$ parameter in the ansatz are fixed, then the loss for each training example can be expressed as $\loss^{(k)}_j(\z_j) = a^{(k)}_j\sin(\z_j)+b^{(k)}_j\cos(\z_j)+c^{(k)}_j$. The coefficients $a_j$ and $b_j$ may be evaluated as:
\begin{align}
    &a_j = \sum_k (2\loss^{(k)}_j(\nicefrac{\pi}{2})-\loss^{(k)}_j(0)-\loss^{(k)}_j(\pi))/2K, \\ &b_j = \sum_k (\loss^{(k)}_j(0)-\loss^{(k)}_j(\pi))/2K. 
\end{align}
The value of the $j^{\text{th}}$ parameter is then updated to:
\begin{equation}
    \z_j^* = \text{arctan2}(a_j, b_j)+\pi
\end{equation}

Next, we replicate the experimental setup from \cite{ssd}. Specifically, we use the same encoding, variational ansatz, initialization, and hyperparameters for the various optimization algorithms (including SPSA and RSGF with a perturbation parameter of $10^{-2}$). To simulate the effect of a finite number of shots, Gaussian noise with zero mean, and $\sigma=10^{-4}$ has been added to the objective values \cite{icd}. The mean and standard deviation of the training losses across 10 independent runs (with the same initial point) against the number of iterations and circuit-executions has been depicted in Fig. \ref{fig:res}. While \roto is found to outperform all the baselines and reach lower loss values, it also has the highest standard deviation of all the algorithms. This reinforces the observations made in \cite{icd}.

\begin{figure}
    \centering
    \begin{subfigure}{\columnwidth}
        \centering
        \includegraphics[width=0.88\linewidth]{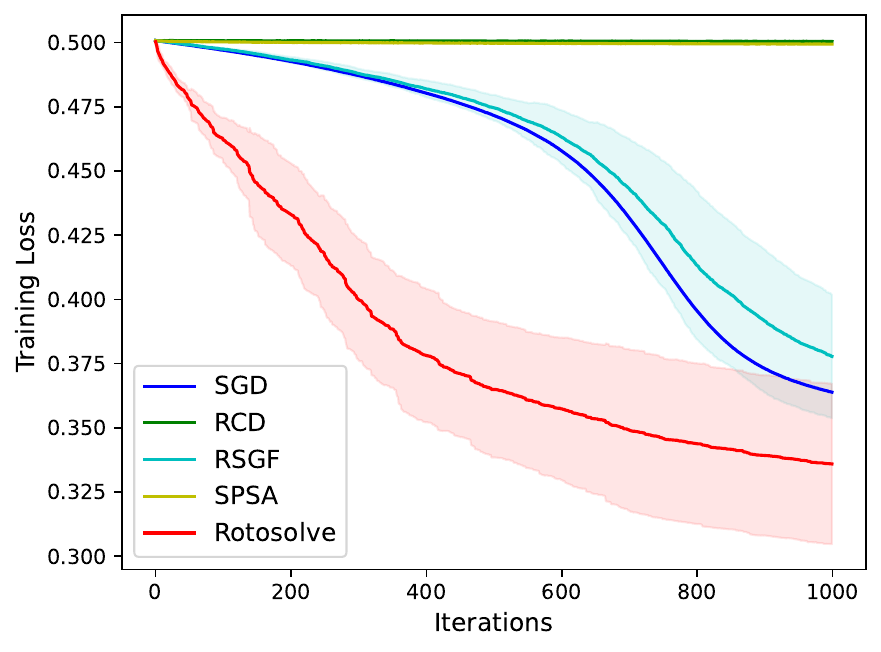}
        \caption{}
    \end{subfigure}
    \begin{subfigure}{\columnwidth}
        \centering
        \includegraphics[width=0.88\linewidth]{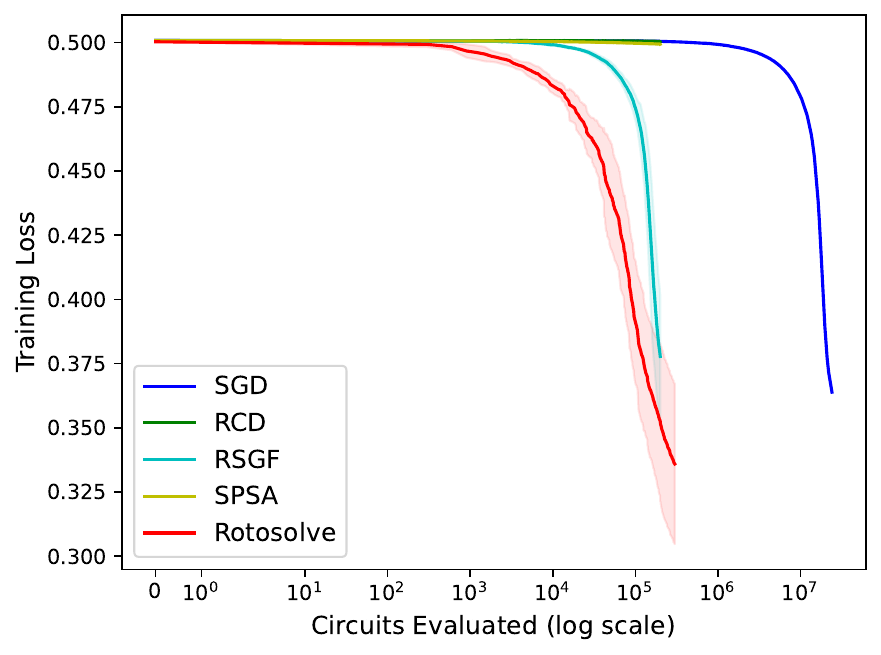}
        \caption{}
    \end{subfigure}
    \caption{Plot of training loss against (a) iterations, and (b) the number of circuit-executions (on a log-scale) for SGD, RCD, RSGF, SPSA, and \roto.}
    \label{fig:res}
    \vspace{-0.4cm}
\end{figure}


\section{Conclusions}\label{sec:con}
In this paper, we provided a rigorous analysis of the convergence of the \roto algorithm, proved its convergence, and derived explicit rates. Our results highlight that it is equally important to study the convergence guarantees of VQA-specific, structure-exploiting optimizers, as it is to analyze generic optimizers in the context of VQAs. We further stress that \roto implicitly leverages first and second derivatives in its updates, and as such, should not be viewed as a zeroth-order method.

As future work, it would be interesting to perform a more refined analysis and derive sharper convergence rates to better explain the superior empirical performance of \texttt{\textbf{Rotosolve}}. Another important direction is to study the convergence of its multiple variants. Finally, one may investigate the existence of pathological objective functions which are designed to adversarially affect the performance of \texttt{\textbf{Rotosolve}}. For example, in such a function, minimization along one direction may lead to maximization along another. We expect that this work will provide a foundation for the analysis of structure-exploiting optimization methods in a quantum setting.

\bibliographystyle{IEEEtran}
\bibliography{refs.bib}

@article{vqe,
	doi = {10.1038/ncomms5213},
  
	url = {https://doi.org/10.1038%2Fncomms5213},
  
	year = 2014,
	month = {jul},
  
	publisher = {Springer Science and Business Media {LLC}
},
  
	volume = {5},
  
	number = {1},
  
	author = {Alberto Peruzzo and Jarrod McClean and Peter Shadbolt and Man-Hong Yung and Xiao-Qi Zhou and Peter J. Love and Al{\'{a}}n Aspuru-Guzik and Jeremy L. O'Brien},
  
	title = {A variational eigenvalue solver on a photonic quantum processor},
  
	journal = {Nature Communications}
}

@article{spsa1,
  title={AN OVERVIEW OF THE SIMULTANEOUS PERTURBATION METHOD FOR EFFICIENT OPTIMIZATION},
  author={James C. Spall},
  journal={Johns Hopkins Apl Technical Digest},
  year={1998},
  volume={19},
  pages={482-492}
}

@article{psr,
	doi = {10.1103/physreva.99.032331},
  
	url = {https://doi.org/10.1103%2Fphysreva.99.032331},
  
	year = 2019,
	month = {mar},
  
	publisher = {American Physical Society ({APS})},
  
	volume = {99},
  
	number = {3},
  
	author = {Maria Schuld and Ville Bergholm and Christian Gogolin and Josh Izaac and Nathan Killoran},
  
	title = {Evaluating analytic gradients on quantum hardware},
  
	journal = {Physical Review A}
}

@article{sgd,
	doi = {10.22331/q-2020-08-31-314},
  
	url = {https://doi.org/10.22331%2Fq-2020-08-31-314},
  
	year = 2020,
	month = {aug},
  
	publisher = {Verein zur Forderung des Open Access Publizierens in den Quantenwissenschaften},
  
	volume = {4},
  
	pages = {314},
  
	author = {Ryan Sweke and Frederik Wilde and Johannes Meyer and Maria Schuld and Paul K. Faehrmann and Barth{\'{e}
}l{\'{e}}my Meynard-Piganeau and Jens Eisert},
  
	title = {Stochastic gradient descent for hybrid quantum-classical optimization},
  
	journal = {Quantum}
}

@article{rcd,
  title = {Random coordinate descent: A simple alternative for optimizing parameterized quantum circuits},
  author = {Ding, Zhiyan and Ko, Taehee and Yao, Jiahao and Lin, Lin and Li, Xiantao},
  journal = {Phys. Rev. Res.},
  volume = {6},
  issue = {3},
  pages = {033029},
  numpages = {25},
  year = {2024},
  month = {Jul},
  publisher = {American Physical Society},
  doi = {10.1103/PhysRevResearch.6.033029},
  url = {https://link.aps.org/doi/10.1103/PhysRevResearch.6.033029}
}

@ARTICLE{spsa2,
  author={Spall, J.C.},
  journal={IEEE Transactions on Aerospace and Electronic Systems}, 
  title={Implementation of the simultaneous perturbation algorithm for stochastic optimization}, 
  year={1998},
  volume={34},
  number={3},
  pages={817-823},
  doi={10.1109/7.705889}}

@article{vqa,
   title={Variational quantum algorithms},
   volume={3},
   ISSN={2522-5820},
   url={http://dx.doi.org/10.1038/s42254-021-00348-9},
   DOI={10.1038/s42254-021-00348-9},
   number={9},
   journal={Nature Reviews Physics},
   publisher={Springer Science and Business Media LLC},
   author={Cerezo, M. and Arrasmith, Andrew and Babbush, Ryan and Benjamin, Simon C. and Endo, Suguru and Fujii, Keisuke and McClean, Jarrod R. and Mitarai, Kosuke and Yuan, Xiao and Cincio, Lukasz and Coles, Patrick J.},
   year={2021},
   month=aug, pages={625–644} }

@article{gpsr,
   title={General parameter-shift rules for quantum gradients},
   volume={6},
   ISSN={2521-327X},
   url={http://dx.doi.org/10.22331/q-2022-03-30-677},
   DOI={10.22331/q-2022-03-30-677},
   journal={Quantum},
   publisher={Verein zur Forderung des Open Access Publizierens in den Quantenwissenschaften},
   author={Wierichs, David and Izaac, Josh and Wang, Cody and Lin, Cedric Yen-Yu},
   year={2022},
   month=mar, pages={677} }

@inproceedings{amira,
 author = {Abbas, Amira and King, Robbie and Huang, Hsin-Yuan and Huggins, William J. and Movassagh, Ramis and Gilboa, Dar and McClean, Jarrod},
 booktitle = {Advances in Neural Information Processing Systems},
 title = {On quantum backpropagation, information reuse, and cheating measurement collapse},
 url = {https://proceedings.neurips.cc/paper_files/paper/2023/file/8c3caae2f725c8e2a55ecd600563d172-Paper-Conference.pdf},
 year = {2023}
}

@misc{pl,
      title={Linear Convergence of Gradient and Proximal-Gradient Methods Under the Polyak-\L{}ojasiewicz Condition}, 
      author={Hamed Karimi and Julie Nutini and Mark Schmidt},
      year={2020},
      eprint={1608.04636},
      archivePrefix={arXiv},
      primaryClass={cs.LG}
}

@article{liu_noise,
  title = {Stochastic noise can be helpful for variational quantum algorithms},
  author = {Liu, Junyu and Wilde, Frederik and Mele, Antonio Anna and Jin, Xin and Jiang, Liang and Eisert, Jens},
  journal = {Phys. Rev. A},
  volume = {111},
  issue = {5},
  pages = {052441},
  numpages = {14},
  year = {2025},
  month = {May},
  publisher = {American Physical Society},
  doi = {10.1103/PhysRevA.111.052441},
  url = {https://link.aps.org/doi/10.1103/PhysRevA.111.052441}
}

@ARTICLE{Ghadimi_Lan,
  title     = "Stochastic first- and zeroth-order methods for nonconvex
               stochastic programming",
  author    = "Ghadimi, Saeed and Lan, Guanghui",
  journal   = "SIAM J. Optim.",
  publisher = "Society for Industrial \& Applied Mathematics (SIAM)",
  volume    =  23,
  number    =  4,
  pages     = "2341--2368",
  month     =  jan,
  year      =  2013,
  language  = "en"
}

@misc{sgd_handbook,
      title={Handbook of Convergence Theorems for (Stochastic) Gradient Methods}, 
      author={Guillaume Garrigos and Robert M. Gower},
      year={2024},
      eprint={2301.11235},
      archivePrefix={arXiv},
      primaryClass={math.OC},
      url={https://arxiv.org/abs/2301.11235}, 
}

@ARTICLE{Nesterov,
  title     = "Random gradient-free minimization of convex functions",
  author    = "Nesterov, Yurii and Spokoiny, Vladimir",
  journal   = "Found. Comut. Math.",
  publisher = "Springer Science and Business Media LLC",
  volume    =  17,
  number    =  2,
  pages     = "527--566",
  month     =  apr,
  year      =  2017,
  language  = "en"
}

@article{rotosolve,
  doi = {10.22331/q-2021-01-28-391},
  url = {https://doi.org/10.22331/q-2021-01-28-391},
  title = {Structure optimization for parameterized quantum circuits},
  author = {Ostaszewski, Mateusz and Grant, Edward and Benedetti, Marcello},
  journal = {{Quantum}},
  issn = {2521-327X},
  publisher = {{Verein zur F{\"{o}}rderung des Open Access Publizierens in den Quantenwissenschaften}},
  volume = {5},
  pages = {391},
  month = jan,
  year = {2021}
}

@article{ho_psr,
  title={Estimating the gradient and higher-order derivatives on quantum hardware},
  author={Mari, Andrea and Bromley, Thomas R and Killoran, Nathan},
  journal={Physical Review A},
  volume={103},
  number={1},
  pages={012405},
  year={2021},
  publisher={APS}
}

@article{smo,
   title={Sequential minimal optimization for quantum-classical hybrid algorithms},
   volume={2},
   ISSN={2643-1564},
   url={http://dx.doi.org/10.1103/PhysRevResearch.2.043158},
   DOI={10.1103/physrevresearch.2.043158},
   number={4},
   journal={Physical Review Research},
   publisher={American Physical Society (APS)},
   author={Nakanishi, Ken M. and Fujii, Keisuke and Todo, Synge},
   year={2020},
   month=Oct }

@article{icd,
   title={Interpolation-based coordinate descent method for parameterized quantum circuits},
   volume={9},
   ISSN={2399-3650},
   url={http://dx.doi.org/10.1038/s42005-025-02473-8},
   DOI={10.1038/s42005-025-02473-8},
   number={1},
   journal={Communications Physics},
   publisher={Springer Science and Business Media LLC},
   author={Lai, Zhijian and Hu, Jiang and Ko, Taehee and Wu, Jiayuan and An, Dong},
   year={2026},
   month=Jan }

@INPROCEEDINGS{ssd,
      title={Stochastic Shadow Descent: Training Parametrized Quantum Circuits with Shadows of Gradients}, 
      author={Sayantan Pramanik and M Girish Chandra},
      booktitle={To be presented at IEEE International Conference on Acoustics, Speech and Signal Processing (ICASSP), 2026}, 
      url={https://arxiv.org/abs/2511.12168}, 
}

@misc{
vqa_smooth,
title={VARIATIONAL QUANTUM ALGORITHMS ARE {L}IPSCHITZ SMOOTH},
author={Christopher Kverne and Mayur Akewar and Nicholas S. DiBrita and Yuqian Huo and Tirthak Patel and Janki Bhimani},
year={2026},
url={https://openreview.net/forum?id=beg6QFuff4}
}

@ARTICLE{excitation_solve,
  title     = "Fast gradient-free optimization of excitations in variational
               quantum eigensolvers",
  author    = "J{\"a}ger, Jonas and Kaldenbach, Thierry N and Haas, Max and
               Schultheis, Erik",
  journal   = "Commun. Phys.",
  publisher = "Springer Science and Business Media LLC",
  volume    =  8,
  number    =  1,
  pages     = "418",
  month     =  oct,
  year      =  2025,
  copyright = "https://creativecommons.org/licenses/by/4.0",
  language  = "en"
}

@misc{calculusonparameterizedquantumcircuits,
      title={Calculus on parameterized quantum circuits}, 
      author={Javier Gil Vidal and Dirk Oliver Theis},
      year={2018},
      eprint={1812.06323},
      archivePrefix={arXiv},
      primaryClass={quant-ph},
      url={https://arxiv.org/abs/1812.06323}, 
}

@misc{jacobidiagonalizationandersonacceleration,
      title={A Jacobi Diagonalization and Anderson Acceleration Algorithm For Variational Quantum Algorithm Parameter Optimization}, 
      author={Robert M. Parrish and Joseph T. Iosue and Asier Ozaeta and Peter L. McMahon},
      year={2019},
      eprint={1904.03206},
      archivePrefix={arXiv},
      primaryClass={quant-ph},
      url={https://arxiv.org/abs/1904.03206}, 
}

@INPROCEEDINGS{fraxis,
  author={Watanabe, Hiroshi C. and Raymond, Rudy and Ohnishi, Yu-Ya and Kaminishi, Eriko and Sugawara, Michihiko},
  booktitle={2021 IEEE International Conference on Quantum Computing and Engineering (QCE)}, 
  title={Optimizing Parameterized Quantum Circuits with Free-Axis Selection}, 
  year={2021},
  volume={},
  number={},
  pages={100-111},
  doi={10.1109/QCE52317.2021.00026}}

@article{fqs,
   title={Sequential optimal selections of single-qubit gates in parameterized quantum circuits},
   volume={9},
   ISSN={2058-9565},
   url={http://dx.doi.org/10.1088/2058-9565/ad4583},
   DOI={10.1088/2058-9565/ad4583},
   number={3},
   journal={Quantum Science and Technology},
   publisher={IOP Publishing},
   author={Wada, Kaito and Raymond, Rudy and Sato, Yuki and Watanabe, Hiroshi C},
   year={2024},
   month=May, pages={035030} }

@misc{fqs_controlled,
      title={Optimizing a parameterized controlled gate using Free Quaternion Selection}, 
      author={Hiroyoshi Kurogi and Katsuhiro Endo and Yuki Sato and Michihiko Sugawara and Kaito Wada and Kenji Sugisaki and Shu Kanno and Hiroshi C. Watanabe and Haruyuki Nakano},
      year={2025},
      eprint={2409.13547},
      archivePrefix={arXiv},
      primaryClass={quant-ph},
      url={https://arxiv.org/abs/2409.13547}, 
}

@INPROCEEDINGS{rotolasso,
  title           = "Optimizing parameters of quantum circuits with
                     sparsity-inducing coordinate descent",
  booktitle       = "Proceedings of the {Thirty-Fourth} International Joint
                     Conference on Artificial Intelligence",
  author          = "Raymond, Rudy and He, Zichang",
  publisher       = "International Joint Conferences on Artificial Intelligence
                     Organization",
  pages           = "6111--6119",
  month           =  sep,
  year            =  2025,
  address         = "California",
  conference      = "Thirty-Fourth International Joint Conference on Artificial
                     Intelligence \{IJCAI-25\}",
  location        = "Montreal, Canada"
}

@misc{improvingvariationalquantumcircuit,
      title={Improving Variational Quantum Circuit Optimization via Hybrid Algorithms and Random Axis Initialization}, 
      author={Joona V. Pankkonen and Lauri Ylinen and Matti Raasakka and Ilkka Tittonen},
      year={2025},
      eprint={2503.20728},
      archivePrefix={arXiv},
      primaryClass={quant-ph},
      url={https://arxiv.org/abs/2503.20728}, 
}

@ARTICLE{Nesterov_coordinate_descent,
  title     = "Efficiency of coordinate descent methods on huge-scale
               optimization problems",
  author    = "Nesterov, Yu",
  journal   = "SIAM J. Optim.",
  publisher = "Society for Industrial \& Applied Mathematics (SIAM)",
  volume    =  22,
  number    =  2,
  pages     = "341--362",
  month     =  jan,
  year      =  2012
}

@misc{Wright_coordinate_descent,
      title={Coordinate Descent Algorithms}, 
      author={Stephen J. Wright},
      year={2015},
      eprint={1502.04759},
      archivePrefix={arXiv},
      primaryClass={math.OC},
      url={https://arxiv.org/abs/1502.04759}, 
}

@inproceedings{finite_sum,
 author = {Arjevani, Yossi and Shamir, Ohad},
 booktitle = {Advances in Neural Information Processing Systems},
 editor = {D. Lee and M. Sugiyama and U. Luxburg and I. Guyon and R. Garnett},
 pages = {},
 publisher = {Curran Associates, Inc.},
 title = {Dimension-Free Iteration Complexity of Finite Sum Optimization Problems},
 url = {https://proceedings.neurips.cc/paper_files/paper/2016/file/299570476c6f0309545110c592b6a63b-Paper.pdf},
 volume = {29},
 year = {2016}
}

@misc{adam,
      title={Adam: A Method for Stochastic Optimization}, 
      author={Diederik P. Kingma and Jimmy Ba},
      year={2017},
      eprint={1412.6980},
      archivePrefix={arXiv},
      primaryClass={cs.LG},
      url={https://arxiv.org/abs/1412.6980}, 
}

\end{document}